\documentclass[preprint,nonatbib]{elsarticle}

\usepackage[english]{babel}
\usepackage[acronym, nonumberlist]{glossaries}
\usepackage{amsmath, amsfonts, amssymb, mathtools, amsthm}
\usepackage{graphicx}
	\DeclareGraphicsExtensions{.pdf,.jpeg,.png}
\usepackage{circuitikz}
    \usetikzlibrary{positioning, fit}
\usepackage{xcolor}
	\definecolor{dark_blue}{HTML}{001219}
	\definecolor{blue}{HTML}{005F73}
	\definecolor{red}{HTML}{9B2226}
	\definecolor{orange}{HTML}{BB3E03}
	\definecolor{yellow}{HTML}{EE9B00}
\usepackage{siunitx}
	\DeclareSIUnit\perunit{pu}
	\DeclareSIUnit[number-unit-product = ]\percent{\%}
\usepackage[caption=false, font=normalsize]{subfig}
\usepackage{booktabs}
\usepackage[hyphens]{xurl}
\usepackage[hidelinks]{hyperref}
\usepackage{bm}
\usepackage{dirtytalk}
\usepackage{enumitem}
\usepackage{algorithm}
\usepackage{algpseudocode}
\usepackage{placeins}
\theoremstyle{definition}

\newtheorem{assumption}{Assumption}
\newtheorem{definition}{Definition}
\newtheorem{theorem}{Theorem}
\newtheorem{corollary}{Corollary}[theorem]
\newtheorem{proposition}{Proposition}
\newtheorem{remark}{Remark}

\NewDocumentCommand{\inv}{s m}{%
  \IfBooleanTF{#1}{%
    \left(#2\right)^{-1}%
  }{%
    #2^{-1}%
  }%
}
\NewDocumentCommand{\trans}{s m}{%
  \IfBooleanTF{#1}{%
    \left(#2\right)^{\mathsf{T}}%
  }{%
    #2^{\mathsf{T}}%
  }%
}
\NewDocumentCommand{\herm}{s m}{%
  \IfBooleanTF{#1}{%
    \left(#2\right)^{\mathsf{H}}%
  }{%
    #2^{\mathsf{H}}%
  }%
}

\newcommand{\complex}{\mathrm{j}}
\DeclareMathOperator{\diagop}{diag}
\newcommand{\diag}[1]{\diagop\!\left(#1\right)}
\DeclareMathOperator{\reop}{Re}
\newcommand{\re}[1]{\reop\!\left(#1\right)}
\DeclareMathOperator{\imop}{Im}
\newcommand{\im}[1]{\imop\!\left(#1\right)}
\newcommand{\expectation}[1]{\mathbb{E}\!\left[#1\right]}
\newcommand{\kernel}[1]{\ker\!\left(#1\right)}
\DeclareMathOperator{\rangeop}{range}
\newcommand{\range}[1]{\rangeop\!\left(#1\right)}
\DeclareMathOperator{\rankop}{rank}
\newcommand{\rank}[1]{\rankop\!\left(#1\right)}
\DeclareMathOperator{\spanMathop}{span}
\newcommand{\spanMath}[1]{\spanMathop\left(#1\right)}
\DeclareMathOperator{\traceop}{tr}
\newcommand{\trace}[1]{\traceop\!\left(#1\right)}

\DeclareMathOperator{\covop}{cov}
\newcommand{\cov}[1]{\covop\!\left(#1\right)}
\DeclareMathOperator{\projop}{proj}
\newcommand{\proj}[2]{\projop_{#1}\!\left(#2\right)}
\DeclareMathOperator*{\argmin}{argmin}
\DeclareMathOperator{\maxMathop}{max}
\newcommand{\maxMath}[1]{\maxMathop\!\left(#1\right)}

\myfooter[L]{\makebox[\textwidth][l]{Preprint\hfill\today}}
\newacronym{siso}{SISO}{single-input single-output}
\newacronym{miso}{MISO}{multiple-input single-output}
\newacronym{mimo}{MIMO}{multiple-input multiple-output}
\newacronym[shortplural=FCRs, longplural=frequency containment reserves]{fcr}{FCR}{frequency containment reserve}
\newacronym[shortplural=DFTs, longplural=discrete Fourier transforms]{dft}{DFT}{discrete Fourier transform}
\newacronym[shortplural=TFs, longplural=transfer functions]{tf}{TF}{transfer function}
\newacronym[shortplural=GPs, longplural=Gaussian processes]{gp}{GP}{Gaussian process}
\newacronym[shortplural=GPRs, longplural=Gaussian process regressions]{gpr}{GPR}{Gaussian process regression}
\newacronym{gptf}{GPTF}{Gaussian process transfer function}
\newacronym{miso-gptf}{MISO-GPTF}{multiple-input single-output Gaussian process transfer function}
\newacronym{map}{MAP}{maximum a posteriori}
\newacronym[shortplural=SVDs, longplural=singular value decompositions]{svd}{SVD}{singular value decomposition}
\newacronym[shortplural=FIRs, longplural=finite impulse responses]{fir}{FIR}{finite impulse response}
\newacronym{mc}{MC}{Monte Carlo}
\newacronym[shortplural=SNRs, longplural=signal-to-noise ratios]{snr}{SNR}{signal-to-noise ratio}

\begin{document}
	\begin{frontmatter}
	\title{Estimating Closed-Loop Multiple-Input Single-Output Dynamic Systems using Gaussian Processes}

	\author[ntnu,statkraft]{Erik-Anant Stedjan Narayan\corref{cor1}}
	\cortext[cor1]{Corresponding author: erik-anant.s.narayan@ntnu.no}
	\author[sintef]{Sigurd Hofsmo Jakobsen}
	\author[ntnu]{Kjetil Obstfelder Uhlen}

	\affiliation[ntnu]{organization={Department of Electric Energy, Norwegian University of Science and Technology}, city={Trondheim}, country={Norway}}       
	\affiliation[statkraft]{organization={Statkraft Energi AS}, city={Oslo}, country={Norway}}     
	\affiliation[sintef]{organization={SINTEF Energy Research}, city={Trondheim}, country={Norway}}

	\begin{abstract}
		This paper extends complex \acrlong{gpr}-based \gls{tf} estimation from the \acrlong{siso} setting to closed-loop \gls{miso} systems. Accurate estimation of individual \glspl{tf} in closed-loop \gls{miso} systems is challenging as feedback and correlation between measured signals can obscure the contributions of each input. We formulate a \acrlong{gp}-based estimator for \gls{miso} systems, provide analytical conditions under which each system module can be separately identified, and show that the estimator converges to the true \glspl{tf} in the zero noise limit. Moreover, we demonstrate how conventional closed-loop bias handling techniques relying on external excitation can be employed to construct appropriate regressors. \acrlong{mc} simulations support the analytical identifiability results and highlight the trade-off between closed-loop bias and error introduced by constructing exogenous regressors partly from endogenous signals. The presented work offers an expressive, nonparametric alternative to existing estimation approaches for \gls{miso} systems, while providing a probabilistic characterization of the estimation uncertainty.
	\end{abstract}

	\begin{keyword}
		system identification\sep
        Gaussian process regression\sep 
        multiple-input single-output\sep 
        closed-loop identification\sep 
        frequency domain\sep
        power systems
	\end{keyword}
\end{frontmatter}

\glsresetall

	\section{Introduction}\label{sec:introduction}
This work is concerned with the challenge of identifying multiple system components in closed-loop \gls{miso} systems from measurements of noisy outputs and possibly correlated inputs. Developing models of such systems from measurements is central in fields such as automotive and aerospace engineering and industrial process automation, but the main motivating application is here power system management.

The ongoing shift towards a more renewable-based power system is fundamentally changing the dynamics of the system~\cite{MilanoEtAl2018, GuGreen2023}. Conventional synchronous generators are increasingly being replaced or supplemented by converter-in\-ter\-faced generation. The increasing diversity in generation and load types, control strategies, and operating conditions challenge the reliance on existing physics-based modeling, driving a need for improved system estimation techniques~\cite{ENTSO-E2025}. While \gls{siso} approaches can be appropriate~~\cite{SaarinenEtAl2015}, several relevant identification problems are inherently closed-loop and \gls{miso}. Conventional system identification techniques have previously been employed to estimate the frequency control dynamics of synchronous generators~\cite{JakobsenEtAl2020}, demonstrating both the practical relevance of a \gls{miso} formulation and the feasibility of identifying relevant dynamics in closed-loop operation.

The challenges associated with identifying power system dynamics hint at a more fundamental and general aspect of estimation, namely, the interplay between data and model structure. One way of categorizing system identification methods is into parametric and nonparametric approaches. Classical parametric methods build models from discrete model structures, independent of the data size~\cite{Ljung1999}. Nonparametric approaches, on the other hand, construct flexible models that adapt to the data size.

Within nonparametric estimation, the subgenre of kernel-based regularization methods have in the last decade gained increasing popularity within the system identification community~\cite{PillonettoEtAl2014}---even to the point where some argue that it has ushered in a new paradigm~\cite{LjungandMu2020}. Common for these approaches is that they frame system identification as a function estimation problem in some possibly infinite-dimensional space, and then handle the arising ill-posedness of such an endeavor by employing regularization methods. These regularization methods admit a Bayesian interpretation, and so the function estimation problem can be recast in the framework of \gls{gpr}~\cite{RasmussenWilliams2006}. Doing so offers some inherent advantages, e.g.,
\begin{enumerate}[label=(\arabic*)]
    \item the choice of covariance functions lets one incorporate prior structural system information like smoothness and stability;
    \item the resulting nonparametric representation does not require a model order to be specified a priori; and
    \item the approach results in a closed-form posterior distribution with both pointwise estimates and quantified uncertainty bounds.
\end{enumerate}

Time-domain methods leveraging this were introduced in~\cite{PillonettoDeNicolao2010, PillonettoEtAl2011}, and later extended to give continuous frequency-domain models using complex \gls{gpr} together with suitable covariance functions~\cite{LataireChen2016, HallemansEtAl2022}. The resulting method is aptly named the \gls{gptf} estimator. 

A key characteristic of this frequency-domain estimator is the appearance and handling of a transient term when estimating from finite signal recordings. The \gls{tf} and transient are modeled as independent \glspl{gp} and are estimated simultaneously from the posterior conditional mean. This treatment helps define the posterior decomposition, but it does not itself establish physical or structural identifiability. 

This issue becomes more important when we not only seek to estimate and distinguish a single module from a transient, but multiple modules in a \gls{miso} system. Classical system identification has long recognized that unique module recovery in \gls{miso} systems requires sufficiently informative excitation, commonly expressed through persistent excitation or nonsingularity of a regressor Gramian~\cite{VandenHofEtAl2013}.

However, conventional persistence-of-excitation conditions do not directly translate to frequency-domain model estimation using complex \gls{gpr}. In this setting, the question is no longer if inputs are rich enough for a prescribed dynamic model, but rather whether several input-weighted \gls{gp}-supported latent spaces form a uniquely decomposable observation model.

Moreover, in addition to only considering \gls{siso} systems, existing \gls{gp}-based \gls{tf} estimators do not offer solutions that are appropriate for closed-loop systems where the addition of output disturbances can bias the estimates. Several approaches to estimating closed-loop systems are presented in the literature, e.g.,~\cite{ForssellLjung1999}, or for networked systems containing feedback~\cite{VandenHofEtAl2013}, but the bridge to \gls{gpr}-based methods is yet to be made.

This leads us to the contributions of this work. Specifically, we
\begin{enumerate}[label=(\arabic*)]
    \item extend existing open-loop \gls{siso}-\gls{gptf} estimators to the closed-loop \gls{miso} case. We first formulate the \gls{miso} estimator as the posterior mean of the distribution constructed by conditioning the \glspl{gp} representing the \glspl{tf} of interest, along with the transient, on measurements of the single output and multiple inputs;
    \item express exact module identifiability  as a direct-sum or rank condition on input-weighted prior factors, and show that the \gls{miso} estimator converges to the true modules in the noise-free limit;
    \item demonstrate how well-established  closed-loop bias-handling techniques can be employed in the \gls{gpr}-based \gls{tf} estimation setting; and
    \item verify the theoretical developments and efficacy of the developed estimator using \gls{mc} simulations of a simple dynamic power system model in both open- and closed-loop.
\end{enumerate}

The remainder of the paper is organized as follows: Section~\ref{sec:methods} sets up the problem at hand, extends existing \gls{siso}-\gls{gptf} estimators to the \gls{miso} case, develops the conditions required for exact module separability and identifiability, and considers how to handle systems with feedback. In Section~\ref{sec:results}, we present the power system case study, along with \gls{mc}-based verification of the developed identifiability criterion and presented closed-loop bias handling. Section~\ref{sec:discussion} contains a discussion, including considerations of the limitations of the work and future paths. Section~\ref{sec:conclusions} concludes the paper.
	\section{Methods}\label{sec:methods}
In this section, we present the central theoretical developments of this work. We formulate the \gls{miso} estimator based on complex \gls{gpr}, derive requirements for exact identifiability, and suggest how to handle closed-loop bias.

\subsection{Notation}
	For a generic single-argument scalar function \(\zeta\), we define its vectorized extension over a vector \(\bm{a} = \trans{[a_1, \dots, a_n]}\) as the element-wise mapping 
	\begin{equation}
		\zeta(\bm{a}) = \trans{
	\begin{bmatrix}
		\zeta(a_1) & \cdots & \zeta(a_n)
	\end{bmatrix}
	}.
	\end{equation}
	Similarly, for a two-argument scalar function \(\tilde{\zeta}\), evaluating the vectorized form \(\tilde{\zeta}(\bm{a}, \bm{b})\), where \(\bm{b} = \trans{[b_1, \dots, b_m]}\), corresponds to mapping the Cartesian product of the inputs to an \(n \times m\) matrix defined as
	\begin{equation}
	\tilde{\zeta}(\bm{a},\bm{b}) = 
    	\begin{bmatrix}
        	\tilde{\zeta}(a_1, b_1) & \dots & \tilde{\zeta}(a_1, b_m) \\
        	\vdots & \ddots & \vdots \\
        	\tilde{\zeta}(a_n, b_1) & \dots & \tilde{\zeta}(a_n, b_m)
   	 \end{bmatrix}.
	\end{equation}
		\subsection{Problem Formulation}\label{sec:problem_formulation}
We begin by considering the following generic time-domain \gls{miso} system
\begin{equation}\label{eq:time_domain_eq}
	y(t) = \sum_{i = 1}^m \left(u_i \ast g_i\right)(t) + v_y(t)
\end{equation}
where the output \(y(t)\) at time \(t\) is obtained by summing the convolutions of the individual impulse responses \(g_i(t)\) with their respective inputs \(u_i(t)\), as well as the additive noise \(v_y(t)\). \(m\) is the number of system modules. Going forward, we consider \(N\) samples of each input and the output, evaluated at \(t_n = nT_\mathrm{s}, \quad n = 0, \dots, N - 1\), associated with a measurement window of length \(T =  NT_\mathrm{s}\) and sampling period \(T_\mathrm{s}\). Taking the \(N\)-point \gls{dft} of the sampled and windowed input and output signals yields the frequency-domain input-output relation
\begin{equation}
	Y(\omega_k) = \sum_{i = 1}^m \left(U_i(\omega_k)G_i(\complex\omega_k) + T_i(\complex\omega_k)\right) + V_y(\omega_k).
\end{equation}
For real-valued signals, it is sufficient to consider only the nonnegative-frequency \gls{dft} bins \(k \in \{0, \dots, \allowbreak \lfloor N/2\rfloor\}\) with corresponding angular frequencies \(\omega_k = 2\pi k / (NT_\mathrm{s})\). The \(G_i(\complex\omega_k)\) are the continuous-time \glspl{tf} of the \(i\)th input-output channel, evaluated at the angular frequency \(\omega_k\), while \(U_i(\omega_k)\) and \(Y(\omega_k)\) are \gls{dft} coefficients corresponding to, respectively, the inputs and output. \(V_y(\omega_k)\) is the additive output noise in the frequency domain and the \(T_i(\complex\omega_k)\) are per-channel transient terms that arise from finite window effects caused by the mismatch between the periodic extension implicit in the \gls{dft} and the measured input-output trajectories~\cite{LataireChen2016}. We can express 
\begin{equation} 
	V_y(\omega_k) = V(\omega_k) + T_H(\complex\omega_k),
\end{equation} 
where \(T_H(\complex\omega_k)\) is a possible smooth noise transient, and introduce the following assumption.
\begin{assumption}[Output noise]\label{ass:output_noise}
	At the considered positive-frequency \gls{dft} bins, the stochastic additive output noise component is given by
	\begin{equation}
		V(\omega_k) = H(\complex\omega_k) E(\omega_k)
	\end{equation}
	where \(H(\complex\omega_k)\) is the frequency response of a stable noise-shaping filter. The sequence \(E(\omega_k)\) is zero-mean, jointly circular complex Gaussian, and uncorrelated across frequency bins with variance \(\sigma_E^2\), such that
	\begin{align}
		\expectation{E(\omega_k)} &= 0 \\
		\expectation{E(\omega_k)\overline{E(\omega_{k'})}} &= \sigma_E^2\delta_{kk'},~\text{and} \\
		\expectation{E(\omega_k)E(\omega_{k'})} &=0
	\end{align}
	where \(\delta_{kk'}\) is the Kronecker delta for \gls{dft} bins \(k\) and \(k'\),  and bar notation denotes complex conjugation.
\end{assumption}	
Under Assumption~\ref{ass:output_noise}, \(V(\omega_k)\) is zero-mean circular complex Gaussian and satisfies 
\begin{equation} 
	\expectation{V(\omega_k)\overline{V(\omega_{k'})}} = \sigma_V^2(\omega_k)\delta_{kk'}, 
\end{equation} 
with 
\begin{equation} 
	\sigma_V^2(\omega_k) = \expectation{\lvert V(\omega_k)\rvert^2} = \left\lvert H(\complex\omega_k)\right\rvert^2 \sigma_E^2.
\end{equation}
Circularity additionally implies 
\begin{equation} 
	\expectation{V(\omega_k)V(\omega_{k'})} = 0.
\end{equation}

As the module and noise transients all enter the output additively, they can be collected into a single equivalent transient. Since each contribution is a smooth function of angular frequency, the resulting equivalent transient is likewise modeled as a smooth, unknown function of angular frequency~\cite{HallemansEtAl2022}. By grouping \(T_H(\complex\omega_k)\) with the module transients as 
\begin{equation}
	T(\complex\omega_k) = \sum_{i=1}^m T_i(\complex\omega_k) + T_H(\complex\omega_k),
\end{equation}
we can express the \gls{miso} system as
\begin{equation}\label{eq:freq_domain_eq_transient}
	Y(\omega_k) = \sum_{i = 1}^m U_i(\omega_k)G_i(\complex\omega_k) + T(\complex\omega_k) + V(\omega_k).
\end{equation}

Equation \eqref{eq:freq_domain_eq_transient} reveals that, for estimation purposes, the equivalent transient can be represented as an additional artificial input channel with a unit-valued frequency-domain regressor. If we let \(G_0(\complex\omega_k) = T(\complex\omega_k)\) and \(U_0(\omega_k) = 1\) for all considered frequencies \(\omega_k\), the augmented frequency-domain \gls{miso} model becomes
\begin{equation}\label{eq:freq_domain_eq}
	Y(\omega_k) = \sum_{i = 0}^m U_i(\omega_k) G_i(\complex\omega_k) + V(\omega_k).
\end{equation}
Unless stated otherwise, any future references to modules or \glspl{tf} of interest include the transient. Now let \(\bm{k} = \trans{[k_1, \dots, k_F]}\) denote the ordered vector of \(F\) \gls{dft} bin indices in the frequency band of interest with corresponding angular frequencies \(\bm{\Omega_k} = \trans{\left[\omega_{k_1}, \dots, \omega_{k_F}\right]}\). The vectorized frequency-domain model can then be expressed as
\begin{equation}\label{eq:dft_model}
	\bm{Y} = \sum_{i = 0}^m \bm{D}_i\bm{G}_i + \bm{V}
\end{equation}
where \(\bm{G}_i = G_i(\complex\bm{\Omega_k})\), \(\bm{D}_i = \diag{U_i(\bm{\Omega_k})}\), \(\bm{Y} = Y(\bm{\Omega_k})\), and \(\bm{V} = V(\bm{\Omega_k})\).

Having established the frequency-domain system description, we are almost ready to move on to developing the \gls{miso} estimator. However, although both the input and output spectra are observed, they play different roles in the inference problem. The input spectra are assumed to be known exactly and are treated as fixed regressors, whereas the output is modeled as a random quantity through its dependence on the unknown module functions and stochastic noise. Accordingly, the probabilistic statements that follow should be understood as conditional on the observed input spectra, although, to improve readability, this conditioning is omitted from the notation.

For this conditional observation model to be valid, we make the following assumption.

\begin{assumption}[Regressor exogeneity]\label{ass:input_exogeneity} 
	Conditioning on the observed input spectra (that is, for \(i = 1, \dots, m\), excluding the virtual transient input) does not alter the stochastic output-noise model. In particular,
	\begin{align}
		\expectation{\bm{V} \mid \bm{U}_1, \dots, \bm{U}_m} &= \bm{0}~\text{and} \\
		\cov{\bm{V} \mid \bm{U}_1, \dots, \bm{U}_m} &= \cov{\bm{V}},
	\end{align}
	where \(\cov{\bm{V}}\) does not depend on the observed input realizations.
\end{assumption}

\subsection{MISO Estimator Based on Complex Gaussian Process Regression}
The \glspl{tf} we seek to identify are unknown infinite-dimensional objects, so we introduce a prior model in order to regularize the estimation problem and interpolate between frequency samples. Using complex \gls{gpr} turns the inference problem into Bayesian estimation of the \(G_i\), conditioned on the frequency-domain outputs.

In order to derive the posterior distribution, we require three covariance quantities, namely,
\begin{enumerate}[label=(\arabic*)]
	\item the prior covariance of the \glspl{tf} themselves;
    \item the covariance of the observed output spectrum; and
    \item the covariance between the output spectrum and the \glspl{tf} we want to estimate.    
\end{enumerate}
Once these quantities have been constructed, the posterior distribution follows directly from standard Gaussian conditioning formulas. Because the \glspl{tf}, inputs, and covariance functions are complex-valued, the derivations are first performed in the complex plane. Subsequently, the equivalent real-valued representation is introduced, allowing the use of standard multivariate Gaussian formulas and numerically stable computations.

\subsubsection{Covariance Matrices}
We first derive the \gls{gp} priors. Each module is modeled as a complex \gls{gp}, using the notation
\begin{equation}\label{eq:cgp}
	G_i(\complex\omega) \sim \mathcal{CGP}(0, K_i(\complex\omega, \complex\Omega), C_i(\complex\omega, \complex\Omega)),\quad \omega,\Omega\in\mathbb{R}.
\end{equation}
We assume zero-mean functions, so the complex \glspl{gp} are characterized by the covariance functions \(K_i(\complex\omega, \complex\Omega) = \expectation{G_i(\complex\omega)G_i(\complex\Omega)^\mathsf{H}}\)---where \(\mathsf{H}\) denotes conjugate transpose---relating the \gls{tf} values at angular frequencies \(\omega\) and \(\Omega\), and the complementary covariance functions \(C_i(\complex\omega, \complex\Omega) = \expectation{G_i(\complex\omega)G_i(\complex\Omega)^\top}\). Using the conjugate symmetry property of the Fourier transform gives
\begin{equation}\label{eq:comp_cov_func}
	C_i(\complex\omega, \complex\Omega) = K_i(\complex\omega, -\complex\Omega).
\end{equation}
Moreover, we also add the below assumption.
\begin{assumption}[Prior and noise independence] \label{ass:prior_independence} 
	Conditional on the observed regressors, the \glspl{gp} used to model the transfer functions \(G_i\), \(i = 0, \ldots, m\), together with the stochastic output-noise vector \(\bm{V}\), are mutually independent.
\end{assumption}

Next, we define the stacked module vector as
\begin{equation}
	\bm{G}(\complex\omega) = \trans{
	\begin{bmatrix}
		G_0(\complex\omega) & \cdots & G_m(\complex\omega)
	\end{bmatrix}
	}.
\end{equation}
Since the module priors are independent, the matrix-valued covariance and complementary covariance functions are
\begin{subequations}\label{eq:test_cov}
\begin{align}
	\bm{K_G}(\complex\omega, \complex\Omega) &= \expectation{\bm{G}(\complex\omega)\herm{\bm{G}(\complex\Omega)}}\notag \\
	&= \diag{K_0(\complex\omega, \complex\Omega), \dots, K_m(\complex\omega, \complex\Omega)}~\text{and} \\
	\bm{C_G}(\complex\omega, \complex\Omega) &= \expectation{\bm{G}(\complex\omega)\trans{\bm{G}(\complex\Omega)}}\notag \\
	&= \diag{C_0(\complex\omega, \complex\Omega), \dots, C_m(\complex\omega, \complex\Omega)},
\end{align}
\end{subequations}
meaning that we are not able to share information between modules. That is, we encode a relationship between the values of \(G_i(\complex\omega)\) and \(G_i(\complex\Omega)\), but not between \(G_i(\complex\omega)\) and \(G_j(\complex\Omega)\) for \(i \neq j\).

Using Assumptions~\ref{ass:output_noise}--\ref{ass:prior_independence}, together with the vectorized observation model \eqref{eq:dft_model}, the output covariance and complementary covariance matrices become, respectively,
\begin{subequations}\label{eq:train_cov}
\begin{align}
	\bm{K_Y}(\complex\bm{\Omega_k}, \complex\bm{\Omega_k}) &= \expectation{\bm{Y}\herm{\bm{Y}}}\notag \\ 
	&= \sum_{i = 0}^m \bm{D}_iK_i(\complex\bm{\Omega_k}, \complex\bm{\Omega_k})\herm{\bm{D}_i} + \diag{\sigma_V^2(\bm{\Omega_k})}~\text{and} \\
	\bm{C_Y}(\complex\bm{\Omega_k}, \complex\bm{\Omega_k}) &= \expectation{\bm{Y}\trans{\bm{Y}}}\notag \\
	&= \sum_{i = 0}^m \bm{D}_iC_i(\complex\bm{\Omega_k}, \complex\bm{\Omega_k})\trans{\bm{D}_i}.
\end{align}
\end{subequations}
\(\bm{Y} \in \mathbb{C}^{F\times 1}\) is distributed as a complex Gaussian vector with \(\bm{K_Y}\) and \(\bm{C_Y}\) both being members of \(\mathbb{C}^{F\times F}\).
\begin{remark}
	Since the \(\bm{D}_i\) are diagonal, their Hermitian transposes are obtained by conjugating the diagonal entries, such that \(\herm{\bm{D}_i} = \overline{\bm{D}_i}\). We nevertheless retain the notation \(\herm{\bm{D}_i}\) as it remains valid also in the case where \(\bm{D}_i\) is replaced by a more general linear operator.
\end{remark}

The final covariance matrix encodes the cross-covariance between outputs and modules. Assumption~\ref{ass:prior_independence} lets us express the matrix-valued cross-covariance and complementary cross-covariance functions as
\begin{subequations}\label{eq:cross_cov}
\begin{align}
	\bm{K_{GY}}(\complex\omega, \complex\bm{\Omega_k}) &= \expectation{\bm{G}(\complex\omega)\herm{\bm{Y}}}\notag \\
	&= \trans{
	\begin{bmatrix}
		K_0(\complex\omega, \complex\bm{\Omega_k})\herm{\bm{D}_0} & \cdots & K_m(\complex\omega, \complex\bm{\Omega_k})\herm{\bm{D}_m}
	\end{bmatrix}
	}~\text{and} \\	
    \bm{C_{GY}}(\complex\omega, \complex\bm{\Omega_k}) &= \expectation{\bm{G}(\complex\omega)\trans{\bm{Y}}}\notag \\
	&= \trans{
	\begin{bmatrix}
		C_0(\complex\omega, \complex\bm{\Omega_k})\trans{\bm{D}_0} & \cdots & C_m(\complex\omega, \complex\bm{\Omega_k})\trans{\bm{D}_m}
	\end{bmatrix}
	}.
\end{align}
\end{subequations}

\subsubsection{Real-Expanded Representation}
In order to use standard multivariate Gaussian conditioning formulas, we represent each complex vector \(\bm{z} \in \mathbb{C}^a\) as its corresponding real composite vector, denoting it with the check accent as
\begin{equation}
	\check{\bm{z}} = \trans{
	\begin{bmatrix}
		\re{\bm{z}} & \im{\bm{z}}
	\end{bmatrix}
	} \in \mathbb{R}^{2a}.
\end{equation}
Similarly, for a complex matrix \(\bm{A} \in \mathbb{C}^{b\times a}\), we let the corresponding real representation be
\begin{equation}
	\check{\bm{A}} = 
	\begin{bmatrix} 
		\re{\bm{A}} & -\im{\bm{A}} \\ 
		\im{\bm{A}} & \re{\bm{A}} 
	\end{bmatrix} 
	\in\mathbb R^{2b\times2a}.
\end{equation}
This way the vector and matrix representations have the multiplicative property
\begin{equation}
	\check{(\bm{A}\bm{z})} = \check{\bm{A}}\check{\bm{z}}.
\end{equation}
Following derivations from~\cite{Lapidoth2017}, for a general covariance-complementary covariance pair \((\bm{K}',\bm{C}')\), we define the corresponding real covariance matrix using the operator \(\bm{\Gamma}(\cdot)\) as
\begin{equation}\label{eq:real_expand}
	\bm{\Gamma}(\bm{K}',\bm{C}')
	= \frac{1}{2}
	\begin{bmatrix}
		\re{\bm{K}' + \bm{C}'} & \im{\bm{C}' - \bm{K}'} \\
		\im{\bm{C}' + \bm{K}'} & \re{\bm{K}' - \bm{C}'}
	\end{bmatrix}.
\end{equation}
Applying the transformation \eqref{eq:real_expand} to \eqref{eq:test_cov}--\eqref{eq:cross_cov} gives the real-valued constituent covariances
\begin{subequations}\label{eq:covs}
\begin{align}
	\bm{K}(\complex\bm{\Omega_k}, \complex\bm{\Omega_k}) &= \bm{\Gamma}(\bm{K_Y}, \bm{C_Y}), \\
	\bm{K}_\ast(\complex\omega, \complex\bm{\Omega_k}) &= \bm{\Gamma}(\bm{K_{GY}}, \bm{C_{GY}})~\text{and} \\
	\bm{K}_{\ast\ast}(\complex\omega, \complex\Omega) &= \bm{\Gamma}(\bm{K_G}, \bm{C_G}),
\end{align}
\end{subequations}
where we have dropped the inner arguments to ease notation. 

By \eqref{eq:cgp}, Assumptions~\ref{ass:output_noise}--\ref{ass:prior_independence}, and the linear observation model \eqref{eq:dft_model}, the real-expanded module functions \(\check{\bm{G}}\) and output measurements \(\check{\bm{Y}}\) are jointly Gaussian for every finite collection of angular frequencies with distribution
\begin{equation}\label{eq:joint_dist}
	\begin{bmatrix}
		\check{\bm{G}}(\complex\omega) \\
		\check{\bm{Y}}
	\end{bmatrix}
	\sim \mathcal{N} 
	\left(
	\begin{bmatrix}
		\bm{0} \\
		\bm{0}
	\end{bmatrix},
	\begin{bmatrix}
		\bm{K}_{\ast\ast}(\complex\omega, \complex\Omega) & \bm{K}_\ast(\complex\omega, \complex\bm{\Omega_k}) \\  
		\trans{\bm{K}_\ast(\complex\Omega, \complex\bm{\Omega_k})} & \bm{K}(\complex\bm{\Omega_k}, \complex\bm{\Omega_k})
	\end{bmatrix}
	\right).
\end{equation}
From this we can construct their conditional distribution and formulate the \gls{miso-gptf} estimator.
\begin{proposition}[\gls{miso-gptf} estimator]
	Conditioning the module functions on the output measurements yields the posterior \gls{gp}
	\begin{equation}
		\check{\bm{G}}(\complex\omega) \mid \check{\bm{Y}} \sim \mathcal{GP}(\check{\bm{\mu}}(\complex\omega), \bm{\Sigma}(\complex\omega, \complex\Omega)),\quad \omega,\Omega\in\mathbb{R},
	\end{equation}
	where
	\begin{align}
		\check{\bm{\mu}}(\complex\omega) &= \bm{K}_\ast(\complex\omega, \complex\bm{\Omega_k})\inv{\bm{K}(\complex\bm{\Omega_k}, \complex\bm{\Omega_k})}\check{\bm{Y}}~\text{and} \\
		\bm{\Sigma}(\complex\omega, \complex\Omega) &= \bm{K}_{\ast\ast}(\complex\omega, \complex\Omega) - \bm{K}_\ast(\complex\omega, \complex\bm{\Omega_k})\inv{\bm{K}(\complex\bm{\Omega_k}, \complex\bm{\Omega_k})}\trans{\bm{K}_\ast(\complex\Omega, \complex\bm{\Omega_k})}.
	\end{align}
	For each angular frequency \(\omega\), the pointwise \gls{map} estimate coincides with the posterior mean, and we construct the complex \gls{miso-gptf} estimates from
	\begin{equation} 
		\hat{\bm{G}}(\complex\omega) = 
		\begin{bmatrix}
			\bm{I}_{m+1} & \complex\bm{I}_{m+1}
		\end{bmatrix}
		\check{\bm{\mu}}(\complex\omega)
	\end{equation}
	where \(\bm{I}_{m+1}\) is the \((m + 1)\times(m + 1)\) identity matrix. Hat notation is used to denote an estimate.
\end{proposition}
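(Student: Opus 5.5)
The proof rests on the joint Gaussianity stated in \eqref{eq:joint_dist}, followed by the standard Schur-complement conditioning formula for real multivariate Gaussians.

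\textbf{Step 1: joint Gaussianity.} For any finite collection of evaluation frequencies \(\omega^{(1)},\dots,\omega^{(p)}\), stack the real expansions \(\check{\bm{G}}(\complex\omega^{(l)})\) together with \(\check{\bm{Y}}\).

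By \eqref{eq:cgp} and Assumption~\ref{ass:prior_independence}, the vector \((\check{\bm{G}}_0,\dots,\check{\bm{G}}_m,\check{\bm{V}})\) is jointly real Gaussian. This uses the fact that a proper complex GP specified by \((K_i,C_i)\) has a real-Gaussian real expansion with covariance \(\bm{\Gamma}(K_i,C_i)\), following~\cite{Lapidoth2017}. By Assumption~\ref{ass:input_exogeneity}, conditioning on the fixed regressors \(\bm{D}_i\) leaves this law unchanged.

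The observation model \eqref{eq:dft_model} is linear, and the multiplicative property \(\check{(\bm{A}\bm{z})}=\check{\bm{A}}\check{\bm{z}}\) shows that \(\check{\bm{Y}}=\sum_i\check{\bm{D}}_i\check{\bm{G}}_i+\check{\bm{V}}\) is a linear image of a Gaussian vector. The stacked vector is therefore jointly Gaussian with zero mean.

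\textbf{Step 2: the covariance blocks.} These require checking that applying \(\bm{\Gamma}\) to the complex pairs \eqref{eq:test_cov}--\eqref{eq:cross_cov} really gives the real (cross-)covariances. This is the step I expect to need the most care. I would verify it directly by writing, for complex \(\bm{a},\bm{b}\),
\[
\expectation{\re{\bm{a}}\trans{\re{\bm{b}}}}=\tfrac12\re{\expectation{\bm{a}\herm{\bm{b}}}+\expectation{\bm{a}\trans{\bm{b}}}},
\]
and similarly for the three other real/imaginary blocks. This reproduces \eqref{eq:real_expand} blockwise.

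The key ingredients are the following. The cross terms between different modules and the noise vanish by independence. The noise contributes \(\diag{\sigma_V^2(\bm{\Omega_k})}\) to \(\bm{K_Y}\) and, by circularity, nothing to \(\bm{C_Y}\). The complementary covariance \(C_i\) for each module comes from \eqref{eq:comp_cov_func}.

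\textbf{Step 3: conditioning.} Assume \(\bm{K}\) is invertible. This holds whenever \(\sigma_V^2>0\) and the complex noise is proper, since then \(\bm{\Gamma}\) of the noise part is \(\tfrac12\diag{\sigma_V^2,\sigma_V^2}\succ0\), and the remaining terms are positive semidefinite.

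The Gaussian conditioning lemma then gives, for every finite set of frequencies, a Gaussian conditional law with mean \(\bm{K}_\ast\inv{\bm{K}}\check{\bm{Y}}\) and covariance \(\bm{K}_{\ast\ast}-\bm{K}_\ast\inv{\bm{K}}\trans{\bm{K}_\ast}\).

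These finite-dimensional conditionals are consistent, because they are marginals of a single conditional of a consistent Gaussian family. Consistency defines the posterior GP with mean \(\check{\bm{\mu}}\) and covariance \(\bm{\Sigma}\).

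\textbf{Step 4: MAP estimate and complex reconstruction.} At fixed \(\omega\), the posterior is Gaussian, so its density is maximized at its mean; hence the MAP estimate equals \(\check{\bm{\mu}}(\complex\omega)\).

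Finally, \(\check{\bm{\mu}}=\trans{[\re{\hat{\bm{G}}},\im{\hat{\bm{G}}}]}\). Applying \([\bm{I}_{m+1}\ \complex\bm{I}_{m+1}]\) therefore recovers \(\re{\cdot}+\complex\im{\cdot}\), which gives the stated complex estimator.
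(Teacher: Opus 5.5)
Your proposal is correct and follows the paper's route: standard Gaussian conditioning applied to the joint law \eqref{eq:joint_dist}, then noting that a Gaussian's mode equals its mean. You also make explicit what the paper takes for granted, namely the blockwise check that \(\bm{\Gamma}\) yields the real (cross-)covariances, invertibility of \(\bm{K}\) via the circular noise term, and consistency of the finite-dimensional conditionals. One wording slip: in Step 1 the module GPs are in general \emph{improper}, since \(C_i(\complex\omega,\complex\Omega)=K_i(\complex\omega,-\complex\Omega)\neq 0\), which is exactly why \(\bm{\Gamma}(K_i,C_i)\) rather than \(\bm{\Gamma}(K_i,0)\) is needed. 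The real-expansion fact you invoke holds for any complex Gaussian, so the argument is unaffected.
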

\begin{proof}
	The posterior mean and covariance follow directly from applying the standard Gaussian conditioning formulas to \eqref{eq:joint_dist}. Since the conditional marginal distribution at each \(\omega\) is Gaussian, its mean coincides with its mode and is therefore the pointwise \gls{map} estimate.
\end{proof}
\begin{remark}
At \(\Omega=\omega\), the entries of \(\bm\Sigma\) associated with the real and imaginary components of an individual module function give its marginal posterior covariance. The remaining entries describe posterior cross-covariances between the module functions, which may be nonzero because they are conditioned on the same output measurements.
\end{remark}
\begin{remark}
	By letting \(m = 1\), it is straightforward to see that the \gls{miso-gptf} estimator is a generalization of existing estimators based on complex \gls{gpr} for \gls{siso} systems, like~\cite{HallemansEtAl2022, LataireChen2016}.
\end{remark}

Selection of appropriate covariance functions and optimizing their hyperparameters is beyond the scope of this paper, but the interested reader is referred to~\cite{NarayanEtAl2026}.

Having introduced the \gls{miso-gptf} estimator, the next step, and the main contribution of this paper, is to determine when it is possible to separately identify the modules from recordings of the single noisy output and multiple inputs.
		\subsection{Identifiability of Individual Modules}\label{sec:module_identifiability}
In this subsection, we derive conditions under which the \gls{miso-gptf} estimator will be able to identify the individual modules of the \gls{miso} system. We begin with a structural identifiability result before moving to the \gls{gpr} setting.

\subsubsection{Structural Module Identifiability}
Let \(\mathcal{H}_i\) denote the function space in which \(G_i\) is to be identified. In a \gls{gp} setting, \(\mathcal{H}_i\) may be chosen as a function space induced by the covariance function, or, on a finite frequency grid, as the prior-supported subspace determined by the covariance matrix. Define the multiplication operator
\begin{equation}\label{eq:multiplication-operator}
	\mathcal{M}_i \colon \mathcal{H}_i \to \mathcal{Y}, \quad (\mathcal{M}_iG_i)(\omega) = U_i(\omega)G_i(\complex\omega),
\end{equation}
where \(\mathcal{Y}\) is the observation space. \(\mathcal{M}_i\) maps the unknown transfer function \(G_i\) from the space \(\mathcal{H}_i\) to the contribution it would produce in the observation space \(\mathcal{Y}\). We assume the following throughout:
\begin{assumption}[Well-defined multiplication operator]
	Each multiplication operator \(\mathcal{M}_i\) is well defined, i.e., \(U_i(\omega)G_i(\complex\omega) \in \mathcal{Y}\) for all \(G_i \in \mathcal{H}_i\), and each \(\mathcal{M}_i\) is bounded.
\end{assumption}

Next, define the product space
\begin{equation}
	\mathcal{H} = \mathcal{H}_0 \oplus \dots \oplus \mathcal{H}_m = \left\{(G_0, \dots, G_m) : G_i \in \mathcal{H}_i\right\},
\end{equation}
i.e., \(\mathcal{H}\) is defined as the external direct sum of the (finitely many) \(\mathcal{H}_i\), denoted by \(\oplus\). \(\mathcal{H}\) is the space of ordered tuples of the \(G_i\). Define the observation operator
\begin{equation}\label{eq:observation-operator}
	\mathcal{A} \colon \mathcal{H} \to \mathcal{Y}, \quad \mathcal{A}(G_0, \dots, G_m) = \sum_{i = 0}^m \mathcal{M}_i G_i.
\end{equation}
This is the forward model and answers the question \say{if the true transfer functions were exactly \(G_0,\dots, G_m\), what would the total noise-free observation be?} \(\mathcal{A}\) multiplies each \(G_i\) by its known input \(U_i\) (via \(\mathcal{M}_i\)) and sums their contributions. As such, \(\mathcal{A}\) maps the set of unknowns to a single predicted output in \(\mathcal{Y}\). Since each \(\mathcal{M}_i\) is linear, \(\mathcal{A}\) is itself linear.

We then define the output contributions as
\begin{equation}
	Z_i = \mathcal{M}_i G_i.
\end{equation}
Each \(Z_i\) is an element in the observation space \(\mathcal{Y}\), representing what module \(i\) alone contributes to the total output. The associated input-weighted module spaces are defined as
\begin{equation}\label{eq:input-weighted-module-space}
	\mathcal{S}_i = \mathcal{M}_i \mathcal{H}_i = \left\{\mathcal{M}_iG_i : G_i \in \mathcal{H}_i\right\} \subseteq \mathcal{Y}.
\end{equation}
This is the range of the operator \(\mathcal{M}_i\). It is the subspace of \(\mathcal{Y}\) spanned by all possible contributions from module \(i\), given the input \(U_i\).

We are now ready to state a purely structural identifiability result.
\begin{definition}[Exact module identifiability]
	The modules \(G_0,\dots, G_m\) are \emph{exactly identifiable} on \(\mathcal{H}\) if 
	\begin{equation}
		\mathcal{A}(G_0,\dots, G_m) = \mathcal{A}(\tilde{G}_0,\dots, \tilde{G}_m)
	\end{equation}
	implies that 
	\begin{equation}
		G_i = \tilde{G}_i,\quad i = 0,\dots, m.
	\end{equation}
\end{definition}
This is the same as requiring that \(\mathcal{A}\) is injective. We can now formulate the following theorem.
\begin{theorem}[Exact identifiability via direct sums]\label{theo:exact_identifiability}
	Let
	\begin{equation}
		\mathcal{S}_i = \mathcal{M}_i \mathcal{H}_i \subseteq \mathcal{Y}
	\end{equation}
	denote the space of possible output contributions from module \(i\). Consider the additive observation model
	\begin{equation}
		Y = \sum_{i = 0}^m Z_i, \quad Z_i = \mathcal{M}_iG_i.
	\end{equation}
	The output contributions \(Z_i\) are uniquely identifiable if and only if
	\begin{equation}
		\mathcal{S}_0 + \dots + \mathcal{S}_m
	\end{equation}
	is a direct sum, i.e.,
	\begin{equation}\label{eq:direct-sum-zero-sum}
		s_0 + \dots + s_m = 0,\quad s_i \in \mathcal{S}_i
	\end{equation}
	implies
	\begin{equation}\label{eq:direct-sum-zero-elements}
		s_0 = \dots = s_m = 0.
	\end{equation}
	The modules \(G_i\) are themselves exactly identifiable if and only if, in addition, each \(\mathcal{M}_i\) is injective on \(\mathcal{H}_i\). Equivalently, the combined observation operator \(\mathcal{A}\) is injective.
\end{theorem}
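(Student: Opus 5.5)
The plan is to treat the theorem as a statement in linear algebra and prove it in three parts: first the claim about output contributions, then the claim about modules, and finally the equivalence with injectivity of \(\mathcal{A}\). Linearity of each \(\mathcal{M}_i\), and hence of \(\mathcal{A}\), is used throughout, so every comparison of two explanations of the same output can be reduced to a statement about zero sums.

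\emph{Output contributions.} By uniqueness of the \(Z_i\) I mean that the map \((Z_0,\dots,Z_m)\mapsto \sum_i Z_i\), viewed as a map from the external product \(\mathcal{S}_0\times\dots\times\mathcal{S}_m\) into \(\mathcal{Y}\), is injective.
\begin{itemize}
\item For sufficiency, assume \(\sum_i Z_i=\sum_i \tilde Z_i\) with \(Z_i,\tilde Z_i\in\mathcal{S}_i\). Each \(\mathcal{S}_i\) is a linear subspace, being the range of a linear operator, so \(s_i=Z_i-\tilde Z_i\in\mathcal{S}_i\). Since \(\sum_i s_i=0\), the direct-sum property \eqref{eq:direct-sum-zero-sum}--\eqref{eq:direct-sum-zero-elements} forces \(s_i=0\) for every \(i\).
\item For necessity, suppose \(\sum_i s_i=0\) with some \(s_j\neq 0\). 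Then the two decompositions \((s_0,\dots,s_m)\) and \((0,\dots,0)\) have the same sum. More generally, adding \(s_i\) to any decomposition \((Z_i)\) produces a different decomposition of the same \(Y\), so uniqueness fails.
\end{itemize}

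\emph{Modules.} Suppose the sum is direct and every \(\mathcal{M}_i\) is injective, and let \(\mathcal{A}(G)=\mathcal{A}(\tilde G)\).
\begin{itemize}
\item Put \(s_i=\mathcal{M}_i(G_i-\tilde G_i)\in\mathcal{S}_i\). These satisfy \(\sum_i s_i=0\), so directness gives \(s_i=0\), and injectivity of \(\mathcal{M}_i\) then gives \(G_i=\tilde G_i\).
\item If the sum is not direct, pick a nontrivial zero sum \(\sum_i s_i=0\). Each \(s_i\) can be written as \(\mathcal{M}_i h_i\) with \(h_i\in\mathcal{H}_i\), and some \(h_j\neq0\) because \(s_j\neq 0\). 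Then \(\mathcal{A}(h)=0=\mathcal{A}(0)\) with \(h\neq 0\), so identifiability fails.
\item If some \(\mathcal{M}_j\) has a nonzero kernel element \(h_j\), the tuple that is \(h_j\) in slot \(j\) and zero elsewhere is a nonzero element of \(\ker\mathcal{A}\), so again identifiability fails.
\end{itemize}

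\emph{Equivalence with injectivity of \(\mathcal{A}\).} Exact identifiability is by definition injectivity of \(\mathcal{A}\). By linearity this is the same as \(\ker\mathcal{A}=\{0\}\), and the two bullet-point arguments above show that this holds exactly when both conditions hold. One can also see this directly by factoring \(\mathcal{A}=\Sigma\circ(\mathcal{M}_0\times\dots\times\mathcal{M}_m)\), where \(\Sigma\) is the summation map on \(\mathcal{S}_0\times\dots\times\mathcal{S}_m\). The factor \(\mathcal{M}_0\times\dots\times\mathcal{M}_m\) is surjective onto the product by the definition of the \(\mathcal{S}_i\). A composition whose first factor is surjective is injective if and only if both factors are injective, and injectivity of \(\Sigma\) is exactly directness of the sum.

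The only step that needs care is the meaning of ``uniquely identifiable'' for the \(Z_i\). It must be read as uniqueness of the decomposition within \(\mathcal{S}_0\times\dots\times\mathcal{S}_m\), and I will state that reading explicitly. Apart from that, the argument is routine linear algebra. Boundedness and well-definedness of the \(\mathcal{M}_i\) play no role beyond guaranteeing that the \(\mathcal{M}_i\) are linear maps into \(\mathcal{Y}\). No closedness of the \(\mathcal{S}_i\) is needed, because the statement concerns algebraic rather than topological direct sums.
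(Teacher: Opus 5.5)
Your proof is correct and follows essentially the same route as the paper's. For the contributions you use differences \(Z_i-\tilde Z_i\in\mathcal{S}_i\) and compare the decomposition \((s_0,\dots,s_m)\) with the zero decomposition; for the modules you use the single-slot tuple to get injectivity of each \(\mathcal{M}_i\) and the substitution \(s_i=\mathcal{M}_i h_i\) to get directness, stating the necessity direction contrapositively where the paper argues it directly. The factorization \(\mathcal{A}=\Sigma\circ(\mathcal{M}_0\times\dots\times\mathcal{M}_m)\) is a tidy extra that the paper does not use, but it does not change the substance of the argument.
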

In short, recovering each module requires two conditions: (1) the spaces of module contributions must not overlap; and (2) each map from module to output must be injective on its admissible space. This makes it possible to uniquely attain \(G_i\) from  \(Z_i\) when \(U_i\) is known.

\begin{proof}
	The proof consists of showing that the contributions forming a direct sum and being unique are equivalent, and that injectivity of \(\mathcal{A}\) implies both that each \(\mathcal{M}_i\) is injective and that the module spaces \(\mathcal{S}_i\) form a direct sum. The full proof is given in \ref{apx:exact_identifiability}.
\end{proof}

\subsubsection{Identifiability in a MISO GPR Setting}
Up until now, we have considered identifiability from a deterministic perspective. We now extend this perspective to the finite-dimensional Gaussian random vectors induced by evaluating the module \glspl{gp} on the selected \gls{dft} grid. 

First let us introduce some prerequisite quantities before introducing a corollary of Theorem~\ref{theo:exact_identifiability}. Consider the real-expanded version of the vectorized fre\-quen\-cy-domain model in \eqref{eq:dft_model},
\begin{equation}\label{eq:real_dft_model}
	\check{\bm{Y}} = \sum_{i = 0}^m \check{\bm{D}}_i\check{\bm{G}}_i + \check{\bm{V}}
\end{equation}
along with Assumption~\ref{ass:prior_independence} and \eqref{eq:cgp}. The grid-evaluated module vectors are then mutually independent and satisfy 
\begin{equation} 
	\check{\bm{G}}_i \sim \mathcal N(\bm{0}, \bm{\Gamma}_i), \qquad i=0,\ldots, m, 
\end{equation}
where 
\begin{equation}
	\bm{\Gamma}_i = \bm{\Gamma}(\bm{K}_i, \bm{C}_i)
\end{equation}
is obtained from the covariance and complementary covariance matrices of the \(i\)th module on the selected frequency grid. Let
\begin{equation}
	r_i = \rank{\bm{\Gamma}_i}
\end{equation}
and let \(\bm{\Gamma}_i^{1/2} \in \mathbb{R}^{2F\times r_i}\) denote a full-column-rank covariance factor satisfying 
\begin{equation}
	\bm{\Gamma}_i^{1/2}\trans*{\bm{\Gamma}_i^{1/2}} = \bm{\Gamma}_i.
\end{equation}		
For each module, the prior-supported and corresponding output contribution subspaces are
\begin{align}
	\mathcal{H}_i &= \range{\bm{\Gamma}_i^{1/2}}~\text{and} \\
	\mathcal{T}_i &= \check{\bm{D}}_i\mathcal{H}_i = \range{\check{\bm{D}}_i\bm{\Gamma}_i^{1/2}} \subseteq \mathbb{R}^{2F}.
\end{align}
Finally, define the concatenated matrix
\begin{equation}\label{eq:contribution_matrix}
	\bm{L} = 
	\begin{bmatrix}
		\check{\bm{D}}_0\bm{\Gamma}_0^{1/2} & \cdots & \check{\bm{D}}_m\bm{\Gamma}_m^{1/2}
	\end{bmatrix}.
\end{equation}

\begin{corollary}[Exact identifiability of \gls{gp}-supported \gls{miso} modules on a finite angular-frequency grid]\label{cor:posterior_consistency}
	Consider the real-expanded frequency-domain model \eqref{eq:real_dft_model}. The GP-supported output contributions \(\check{\bm{Z}}_i = \check{\bm{D}}_i\check{\bm{G}}_i\) are exactly identifiable in the noiseless limit if and only if
	\begin{equation}
		\mathcal{T}_0 + \dots + \mathcal{T}_m
	\end{equation}
	is a direct sum. Equivalently, the contributions are exactly identifiable if and only if
	\begin{equation}
		\rank{\bm{L}} = \sum_{i = 0}^m \rank{\check{\bm{D}}_i\bm{\Gamma}_i^{1/2}}.
	\end{equation}
	The grid-evaluated module vectors are themselves exactly identifiable on their GP-supported subspaces if and only if the contributions are exactly identifiable and, additionally,
	\begin{equation}
		\kernel{\check{\bm{D}}_i} \cap \mathcal{H}_i = \{\bm{0}\},\quad i = 0, \dots, m.
	\end{equation}
	Module-level injectivity is equivalent to
	\begin{equation}
		\rank{\check{\bm{D}}_i\bm{\Gamma}_i^{1/2}} = \rank{\bm{\Gamma}_i^{1/2}} = r_i,\quad i = 0,\ldots, m.
	\end{equation}
	Combining the separability and injectivity requirements gives that modules are exactly identifiable on their \gls{gp}-supported subspaces in the noiseless limit if and only if
	\begin{equation}\label{eq:rank_condition}
		\rank{\bm{L}} = \sum_{i = 0}^m r_i.
	\end{equation}
\end{corollary}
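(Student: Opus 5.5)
The plan is to read the corollary as the finite-dimensional instance of Theorem~\ref{theo:exact_identifiability}. Set \(\mathcal{Y}=\mathbb{R}^{2F}\) and take \(\mathcal{H}_i=\range{\bm{\Gamma}_i^{1/2}}\). Let \(\mathcal{M}_i\) be the restriction of \(\check{\bm{D}}_i\) to \(\mathcal{H}_i\); this is linear and bounded, since all spaces are finite dimensional. Then \(\mathcal{S}_i=\check{\bm{D}}_i\mathcal{H}_i=\mathcal{T}_i\), and the observation operator becomes \((\check{\bm{G}}_0,\dots,\check{\bm{G}}_m)\mapsto\sum_i\check{\bm{D}}_i\check{\bm{G}}_i\). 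In the noiseless limit \(\check{\bm{V}}=\bm{0}\), the model \eqref{eq:real_dft_model} is exactly \(\check{\bm{Y}}=\sum_i\check{\bm{Z}}_i\). Moreover, since \(\check{\bm{G}}_i\sim\mathcal{N}(\bm{0},\bm{\Gamma}_i)\), each \(\check{\bm{G}}_i\) lies almost surely in \(\range{\bm{\Gamma}_i}=\range{\bm{\Gamma}_i^{1/2}}\). This justifies restricting the unknowns to the GP-supported subspaces. Applying the theorem then directly gives two things: the contributions are identifiable if and only if \(\mathcal{T}_0+\dots+\mathcal{T}_m\) is direct, and the modules are identifiable if and only if, in addition, each \(\mathcal{M}_i\) is injective. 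Injectivity of \(\check{\bm{D}}_i\) restricted to \(\mathcal{H}_i\) means precisely \(\kernel{\check{\bm{D}}_i}\cap\mathcal{H}_i=\{\bm{0}\}\).

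Next come the rank reformulations. For the direct-sum condition, I use that \(\range{\bm{L}}=\mathcal{T}_0+\dots+\mathcal{T}_m\), so \(\rank{\bm{L}}=\dim(\sum_i\mathcal{T}_i)\). I also have \(\rank{\check{\bm{D}}_i\bm{\Gamma}_i^{1/2}}=\dim\mathcal{T}_i\). A finite sum of subspaces is direct if and only if \(\dim\sum_i\mathcal{T}_i=\sum_i\dim\mathcal{T}_i\). I will prove this standard fact by induction, using \(\dim(A+B)=\dim A+\dim B-\dim(A\cap B)\) together with the characterization of a direct sum as \(\mathcal{T}_j\cap\sum_{i<j}\mathcal{T}_i=\{\bm{0}\}\) for every \(j\). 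For module-level injectivity, \(\bm{\Gamma}_i^{1/2}\) has full column rank \(r_i\), so it is a bijection from \(\mathbb{R}^{r_i}\) onto \(\mathcal{H}_i\). Hence \(\kernel{\check{\bm{D}}_i\bm{\Gamma}_i^{1/2}}=\{\bm{0}\}\) if and only if \(\kernel{\check{\bm{D}}_i}\cap\mathcal{H}_i=\{\bm{0}\}\), and by rank–nullity this holds if and only if \(\rank{\check{\bm{D}}_i\bm{\Gamma}_i^{1/2}}=r_i\).

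Finally, I combine the two conditions to obtain \eqref{eq:rank_condition}. We always have \(\rank{\bm{L}}\le\sum_i\rank{\check{\bm{D}}_i\bm{\Gamma}_i^{1/2}}\le\sum_i r_i\). Therefore \(\rank{\bm{L}}=\sum_i r_i\) holds if and only if both inequalities are equalities. Since each term satisfies \(\rank{\check{\bm{D}}_i\bm{\Gamma}_i^{1/2}}\le r_i\), equality in the second inequality forces equality term by term. Equivalently, \(\bm{L}\) has full column rank \(\sum_i r_i\), which means \(\bm{L}\) is injective. This matches the injectivity of \(\mathcal{A}\) in the reparametrized coordinates.

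The main subtlety I expect is the modelling step rather than the linear algebra. I must state precisely what "exactly identifiable in the noiseless limit" means for random vectors: identifiability as injectivity of the map from realizations supported on \(\mathcal{H}_i\) to \(\check{\bm{Y}}\). I also need to argue that the almost-sure support justifies replacing the full \(\mathbb{R}^{2F}\) with \(\mathcal{H}_i\). I would handle this by stating the interpretation explicitly before invoking Theorem~\ref{theo:exact_identifiability}.
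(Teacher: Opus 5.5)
Your proposal is correct and follows the paper's route. The paper's proof instantiates Theorem~\ref{theo:exact_identifiability} with \(\mathcal{H}_i=\range{\bm{\Gamma}_i^{1/2}}\) and \(\mathcal{M}_i\) the restriction of \(\check{\bm{D}}_i\), so that \(\mathcal{M}_i\mathcal{H}_i=\mathcal{T}_i\), and it leaves the rank reformulations implicit; your dimension count for directness, the rank--nullity step through the full-column-rank factor, and the bound \(\rank{\bm{L}}\le\sum_i\rank{\check{\bm{D}}_i\bm{\Gamma}_i^{1/2}}\le\sum_i r_i\) fill them in correctly.
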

\begin{proof}
	The result follows directly from applying Theorem~\ref{theo:exact_identifiability} to the finite-di\-men\-sion\-al prior-supported subspaces \(\mathcal{H}_i = \range{\bm{\Gamma}_i^{1/2}}\) and the restricted real-expanded multiplication maps
	\begin{equation}
		\mathcal{M}_i \colon \mathcal{H}_i \to \mathbb{R}^{2F}, \quad \mathcal{M}_i(\bm{\gamma}) = \check{\bm{D}}_i\bm{\gamma},
	\end{equation}
	where \(\bm{\gamma}\in \mathcal{H}_i\) denotes an arbitrary prior-supported module vector. Since \(\mathcal{M}_i\mathcal {H}_i=\mathcal{T}_i\), the direct-sum and injectivity conditions in Theorem~\ref{theo:exact_identifiability} yield the stated conditions.
\end{proof}

\begin{remark}\label{rem:effective_rep}
	In order to fulfill the requirement in \eqref{eq:rank_condition}, all directions supported by the covariance matrices \(\bm{\Gamma}_i\) have to be exactly identifiable on the finite frequency grid. It is common for smooth covariance functions to produce full-rank covariance matrices when evaluated on a finite grid. However, the resulting eigenvalues may decay rapidly, meaning that exact identifiability may fail despite the \gls{gp} posterior still managing to separate the dominant components in practice. To address this, we can replace \(\bm{\Gamma}_i^{1/2}\) by an effective covariance factor. We first compute the eigendecomposition \(\bm{\Gamma}_i = \bm{Q}_i\bm{\Lambda}_i\trans{\bm{Q}_i}\) where \(\bm{\Lambda}_i = \diag{\lambda_{i,1},\dots, \lambda_{i,2F}}\) with \(\lambda_{i,1} \geq \lambda_{i,2} \geq \dots \geq \lambda_{i,2F} \geq 0\). Next, we retain only eigenvalues greater than \(\varepsilon\lambda_{i,1}\), where \(0 < \varepsilon < 1\) sets the threshold, and replace \(\bm{\Gamma}_i^{1/2}\) with the effective covariance factor \(\bm{\Gamma}_{i,\varepsilon}^{1/2} = \bm{Q}_{i,\varepsilon}\bm{\Lambda}_{i,\varepsilon}^{1/2}\). 
\end{remark}
Remark~\ref{rem:effective_rep} leads to the following corollary.

\begin{corollary}[Effective \gls{gp}-supported identifiability]\label{cor:effective_identifiability}
	Let
	\begin{equation}
		\bm{\Gamma}_i = \bm{Q}_i\bm{\Lambda}_i\trans{\bm{Q}_i}
	\end{equation}
	be the eigendecomposition of the real-expanded finite-grid covariance matrix of module \(i\). For tolerance \(0 < \varepsilon < 1\), let 
	\begin{equation}
		\bm{\Gamma}_{i,\varepsilon}^{1/2} = \bm{Q}_{i,\varepsilon}\bm{\Lambda}_{i,\varepsilon}^{1/2}
	\end{equation}
	denote the covariance factor obtained by retaining only eigenvalues larger than \(\varepsilon\lambda_{i,1}\). Define the effective prior-supported subspace as
	\begin{equation}
		\mathcal{H}_{i,\varepsilon} = \range{\bm{\Gamma}_{i,\varepsilon}^{1/2}}, 
	\end{equation}
	and define the effective concatenated contribution matrix as
	\begin{equation} 
		\bm{L}_\varepsilon =
		\begin{bmatrix}
			\check{\bm{D}}_0 \bm{\Gamma}_{0,\varepsilon}^{1/2} & \cdots & \check{\bm{D}}_m \bm{\Gamma}_{m,\varepsilon}^{1/2}
		\end{bmatrix}.
	\end{equation}
	Then the modules are exactly identifiable on their effective prior-supported subspaces if and only if
	\begin{equation}
		\rank{\bm{L}_\varepsilon} = \sum_{i = 0}^m \rank{\bm{\Gamma}_{i,\varepsilon}^{1/2}}.
	\end{equation}
\end{corollary}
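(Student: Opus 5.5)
The plan is to reduce Corollary~\ref{cor:effective_identifiability} to Corollary~\ref{cor:posterior_consistency}, and hence to Theorem~\ref{theo:exact_identifiability}, by replacing the full prior-supported subspaces with the effective ones. The first step is to check that the truncated factor is a legitimate full-column-rank covariance factor. Let \(r_{i,\varepsilon}\) be the number of eigenvalues exceeding \(\varepsilon\lambda_{i,1}\). Then \(\bm{Q}_{i,\varepsilon}\in\mathbb{R}^{2F\times r_{i,\varepsilon}}\) has orthonormal columns, and \(\bm{\Lambda}_{i,\varepsilon}^{1/2}\) is diagonal with strictly positive entries, because \(\varepsilon\lambda_{i,1}\geq 0\) and every retained eigenvalue is strictly larger than this threshold. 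It follows that \(\bm{\Gamma}_{i,\varepsilon}^{1/2}\) has full column rank, \(\rank{\bm{\Gamma}_{i,\varepsilon}^{1/2}} = r_{i,\varepsilon}\), and \(\mathcal{H}_{i,\varepsilon} = \range{\bm{Q}_{i,\varepsilon}}\) is a linear subspace of \(\mathbb{R}^{2F}\) of dimension \(r_{i,\varepsilon}\). If \(\lambda_{i,1}=0\), no eigenvalue is retained, \(\mathcal{H}_{i,\varepsilon}=\{\bm{0}\}\), and the module contributes zero to both sides of the rank equality, so this degenerate case goes through trivially.

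Next, I apply Theorem~\ref{theo:exact_identifiability} exactly as in the proof of Corollary~\ref{cor:posterior_consistency}. The admissible spaces are \(\mathcal{H}_{i,\varepsilon}\), the maps are \(\mathcal{M}_i(\bm{\gamma}) = \check{\bm{D}}_i\bm{\gamma}\) restricted to \(\mathcal{H}_{i,\varepsilon}\), and the observation operator is \(\mathcal{A}_\varepsilon(\bm{\gamma}_0,\dots,\bm{\gamma}_m)=\sum_i \check{\bm{D}}_i\bm{\gamma}_i\). These maps are linear and bounded because the spaces are finite dimensional, so the standing assumption on the multiplication operators holds. Theorem~\ref{theo:exact_identifiability} then says the modules are exactly identifiable on \(\mathcal{H}_{i,\varepsilon}\) if and only if \(\mathcal{A}_\varepsilon\) is injective.

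The remaining step, which is the only real content, is to show that injectivity of \(\mathcal{A}_\varepsilon\) is equivalent to \(\rank{\bm{L}_\varepsilon} = \sum_i r_{i,\varepsilon}\). Every element of \(\mathcal{H}_{i,\varepsilon}\) can be written uniquely as \(\bm{\gamma}_i = \bm{\Gamma}_{i,\varepsilon}^{1/2}\bm{c}_i\), by full column rank. The map \((\bm{c}_0,\dots,\bm{c}_m)\mapsto(\bm{\gamma}_0,\dots,\bm{\gamma}_m)\) is therefore a linear bijection from \(\mathbb{R}^{\sum_i r_{i,\varepsilon}}\) onto \(\mathcal{H}_{0,\varepsilon}\oplus\dots\oplus\mathcal{H}_{m,\varepsilon}\). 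Under this parametrization, \(\mathcal{A}_\varepsilon\) becomes multiplication by \(\bm{L}_\varepsilon\). Hence \(\mathcal{A}_\varepsilon\) is injective if and only if \(\kernel{\bm{L}_\varepsilon}=\{\bm{0}\}\). By rank–nullity, this holds if and only if \(\bm{L}_\varepsilon\) has full column rank, that is, \(\rank{\bm{L}_\varepsilon}=\sum_i r_{i,\varepsilon}=\sum_i\rank{\bm{\Gamma}_{i,\varepsilon}^{1/2}}\).

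For completeness, I would also mirror the decomposition of Corollary~\ref{cor:posterior_consistency}. Full column rank of \(\bm{L}_\varepsilon\) forces each block \(\check{\bm{D}}_i\bm{\Gamma}_{i,\varepsilon}^{1/2}\) to have full column rank, which is the module-level injectivity \(\kernel{\check{\bm{D}}_i}\cap\mathcal{H}_{i,\varepsilon}=\{\bm{0}\}\). It also forces the ranges \(\check{\bm{D}}_i\mathcal{H}_{i,\varepsilon}\) to form a direct sum, and conversely these two conditions together give full column rank.

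I expect no serious obstacle. The one point that needs care is that the rank of the truncated factor equals the number of retained eigenvalues, which depends on the strict inequality in the truncation rule. I would therefore state that explicitly rather than invoke \(\rank{\bm{\Gamma}_i}\).
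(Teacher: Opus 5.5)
Your proposal is correct and takes the same route as the paper, which proves the corollary by applying Theorem~\ref{theo:exact_identifiability} to the effective subspaces \(\mathcal{H}_{i,\varepsilon}\) with the restricted maps \(\bm{\gamma}\mapsto\check{\bm{D}}_i\bm{\gamma}\). The paper leaves implicit the step from injectivity to the rank condition; you fill it in by showing that \(\bm{\Gamma}_{i,\varepsilon}^{1/2}\) has full column rank \(r_{i,\varepsilon}\) (because of the strict threshold) and by using the coefficient parametrization to show that injectivity of \(\mathcal{A}_\varepsilon\) is exactly \(\kernel{\bm{L}_\varepsilon}=\{\bm{0}\}\).
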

\begin{proof}
	The result follows from applying Theorem~\ref{theo:exact_identifiability} to the effective prior-sup\-por\-ted subspaces \(\mathcal{H}_{i,\varepsilon} = \range{\bm{\Gamma}_{i,\varepsilon}^{1/2}}\) and the restricted real-expanded multiplication maps
	\begin{equation}
		\mathcal{M}_{i,\varepsilon} \colon \mathcal{H}_{i,\varepsilon} \to \mathbb{R}^{2F}, \quad \mathcal{M}_{i,\varepsilon}(\bm{\gamma}) = \check{\bm{D}}_i\bm{\gamma},
	\end{equation}
	where \(\bm{\gamma}\in \mathcal{H}_{i,\varepsilon}\) is an arbitrary effective prior-supported module vector.
\end{proof}

The effective condition does not imply strict exact identifiability of the full \gls{gp} model. It states that the data can uniquely separate the retained dominant prior directions. The discarded directions are treated as negligible at the chosen tolerance.

Corollaries~\ref{cor:posterior_consistency} and~\ref{cor:effective_identifiability} give structural conditions for exact identifiability on the finite frequency grid. The next question is then whether these conditions are reflected in the actual \gls{miso-gptf} estimator.

\subsubsection{Zero-Noise Consistency of the Posterior Mean}
We now examine if the posterior mean, i.e., the \gls{miso-gptf} estimate, recovers the true grid-evaluated \gls{tf} vectors as the additive noise variance tends to zero. In the following analysis, \(\bm{\Gamma}_i^{1/2}\) denotes the chosen covariance factor, be that the full or effective version, and \(\bm{L}\) denotes the stacked matrix associated with the chosen prior representation. The derivations are identical in both cases, only the interpretation of the prior-supported subspaces changes.

For each module, write its real-expanded grid-evaluated vector in terms of the covariance factor \(\bm{\Gamma}_i^{1/2}\) as
\begin{equation}\label{eq:module_whitened_representation} 
	\check{\bm{G}}_i = \bm{\Gamma}_i^{1/2}\bm{W}_i, \quad \bm{W}_i \sim \mathcal N(\bm{0},\bm{I}_{r_i}), 
\end{equation}
i.e., its whitened basis. Since the module priors are mutually independent, the stacked latent vector
\begin{equation} 
	\bm{W}= \trans{
	\begin{bmatrix}
		\trans{\bm{W}}_0 & \cdots & \trans{\bm{W}}_m
	\end{bmatrix}
	} \in \mathbb{R}^R,
\end{equation}
where
\begin{equation}
	R = \sum_{i=0}^{m}r_i, 
\end{equation}
satisfies 
\begin{equation} 
	\bm{W} \sim \mathcal{N}(\bm{0},\bm{I}_R). 
\end{equation}
Substituting \eqref{eq:module_whitened_representation} into the real-expanded observation model \eqref{eq:real_dft_model} gives 
\begin{equation}\label{eq:latent_observation_model} 
	\check{\bm{Y}} = \bm{L}\bm{W} + \check{\bm{V}}, 
\end{equation} 
where \(\bm{L} \in \mathbb R^{2F\times R}\) is as defined in \eqref{eq:contribution_matrix}. The stochastic noise model introduced in Assumption~\ref{ass:output_noise} generally gives a fre\-quen\-cy-dependent noise covariance, so we write the covariance of \(\check{\bm{V}}\) as \(\sigma^2\bm{R_V}\) where \(\sigma^2\) controls the overall noise level and \(\bm{R_V}\) is a fixed positive-definite noise-shaping matrix. Now let \(\bm{P_V}\) be a whitening matrix satisfying
\begin{equation}
	\bm{P_VR_V}\trans{\bm{P_V}} = \bm{I}_{2F}.
\end{equation}
Defining 
\begin{subequations}
\begin{align} 
	\bm{Y}^{(W)} &= \bm{P_V}\check{\bm{Y}}, \\\
	\bm{L}^{(W)} &= \bm{P_V}\bm{L},\label{eq:prewhitened_L}~\text{and} \\
	\bm{\epsilon} &= \inv{\sigma}\bm{P_V}\check{\bm{V}}, 
\end{align} 
\end{subequations}
gives the prewhitened model 
\begin{equation}\label{eq:prewhitened_model} 
	\bm{Y}^{(W)} = \bm{L}^{(W)}\bm{W} + \sigma\bm{\epsilon}, \quad \bm{\epsilon} \sim \mathcal{N}(\bm{0},\bm{I}_{2F}). 
\end{equation}
As \(\bm{P_V}\) is nonsingular, prewhitening does not change the rank or exact identifiability of the forward model. Note how Assumption~\ref{ass:prior_independence} also implies that \(\bm{W}\) is independent of both \(\check{\bm{V}}\) and its prewhitened representation \(\bm{\epsilon}\). Applying the standard conditional mean formula for Gaussian processes to \eqref{eq:prewhitened_model} therefore gives the conditional mean of \(\bm{W}\) conditioned on \(\bm{Y}\) as
\begin{equation}\label{eq:post_latent_mean}
	\hat{\bm{W}} = \trans*{\bm{L}^{(W)}}\inv*{\bm{L}^{(W)}\trans*{\bm{L}^{(W)}} + \sigma^2\bm{I}_{2F}}\bm{Y}^{(W)}.
\end{equation}
Partitioning \(\hat{\bm{W}}\) into module-specific blocks \(\hat{\bm{W}}_i\), together with \eqref{eq:module_whitened_representation} and the linearity of conditional expectation, gives the real-expanded \gls{miso-gptf} estimate as the covariance factor acting on the corresponding block of the latent conditional mean, i.e.,
\begin{equation}\label{eq:mean_whitened_representation}
	\hat{\check{\bm{G}}}_i = \bm{\Gamma}_i^{1/2}\hat{\bm{W}}_i.
\end{equation}

In order to characterize the small noise behavior of the estimate, consider the compact \gls{svd}
\begin{subequations}\label{eq:svd}
\begin{align}
	\bm{L}^{(W)} &= \bm{U}_\mathrm{s}\bm{S}\trans{\bm{V}_\mathrm{s}},~\text{where} \\
	\bm{S} &= \diag{s_1, \dots, s_P},~\text{and} \\
	P &= \rank{\bm{L}^{(W)}}.
\end{align}
\end{subequations}
Inserting \eqref{eq:prewhitened_model} and \eqref{eq:svd} into \eqref{eq:post_latent_mean} lets us express
\begin{equation}\label{eq:latent_mean_svd} 
	\hat{\bm{W}} = \bm{V}_\mathrm{s} \diag{\frac{s_p^2}{s_p^2 + \sigma^2}} \trans{\bm{V}_\mathrm{s}} \bm{W} + \sigma \bm{V}_\mathrm{s} \diag{ \frac{s_p}{s_p^2 + \sigma^2}} \trans{\bm{U}_\mathrm{s}} \bm{\epsilon},
\end{equation}
where \(s_p\) is the \(p\)th singular value. The first term of \eqref{eq:latent_mean_svd} describes the shrinkage of the true latent coefficients, and the second describes the contribution of measurement noise to the posterior mean. The full derivation of \eqref{eq:latent_mean_svd} is given in \ref{apx:miso-gptf_consistency}.

With this established, we can introduce the following proposition.
\begin{proposition}[Zero-noise consistency of the \gls{miso-gptf} estimator]\label{prop:miso-gptf_consistency}
	Consider the finite-grid \gls{gp} model of Corollary~\ref{cor:posterior_consistency} and suppose that the module-identifiability condition
	\begin{equation}\label{eq:rank_condition_holds}
		\rank{\bm{L}} = \sum_{i = 0}^m r_i = R
	\end{equation}
	holds. Assume that the real-expanded noise covariance can be written as \(\sigma^2\bm{R_V}\), where \(\bm{R_V}\) is fixed and positive definite. Then, as \(\sigma^2 \to 0\),
	\begin{equation}
		\hat{\bm{W}} \to \bm{W}
	\end{equation}
	in \(L^2\) and hence in probability. Consequently, for every module \(i = 0, \dots, m\),
	\begin{equation}
		\hat{\check{\bm{G}}}_i \to \check{\bm{G}}_i
	\end{equation}
	in \(L^2\) and probability.
\end{proposition}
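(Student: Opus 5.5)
The plan is to work directly from the SVD representation \eqref{eq:latent_mean_svd} and bound the mean-squared error $\expectation{\lVert \hat{\bm{W}} - \bm{W}\rVert^2}$ explicitly as a function of $\sigma^2$.

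\textbf{Step 1: full column rank.} Since $\bm{P_V}$ is nonsingular, \eqref{eq:rank_condition_holds} gives $P = \rank{\bm{L}^{(W)}} = \rank{\bm{L}} = R$. The matrix $\bm{L}^{(W)}$ has exactly $R$ columns, so it has full column rank. Hence $\bm{V}_\mathrm{s} \in \mathbb{R}^{R\times R}$ is square and orthogonal, $\bm{V}_\mathrm{s}\trans{\bm{V}_\mathrm{s}} = \bm{I}_R$, and every singular value satisfies $s_p > 0$. This is the only place the identifiability condition enters, and it is the crux. Without full column rank, $\bm{V}_\mathrm{s}\trans{\bm{V}_\mathrm{s}}$ would be a projection onto a proper subspace. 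The component of $\bm{W}$ in $\kernel{\bm{L}^{(W)}}$ would then never be recovered: the error would tend to $\expectation{\lVert(\bm{I}-\bm{V}_\mathrm{s}\trans{\bm{V}_\mathrm{s}})\bm{W}\rVert^2} = R-P > 0$.

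\textbf{Step 2: error decomposition.} Using $\bm{W} = \bm{V}_\mathrm{s}\trans{\bm{V}_\mathrm{s}}\bm{W}$, \eqref{eq:latent_mean_svd} becomes
\begin{equation*}
\hat{\bm{W}} - \bm{W} = -\bm{V}_\mathrm{s}\diag{\frac{\sigma^2}{s_p^2+\sigma^2}}\trans{\bm{V}_\mathrm{s}}\bm{W} + \sigma\bm{V}_\mathrm{s}\diag{\frac{s_p}{s_p^2+\sigma^2}}\trans{\bm{U}_\mathrm{s}}\bm{\epsilon}.
\end{equation*}
By Assumption~\ref{ass:prior_independence}, $\bm{W}$ and $\bm{\epsilon}$ are independent and zero mean, so the cross term vanishes in expectation. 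We also have $\expectation{\bm{W}\trans{\bm{W}}} = \bm{I}_R$, and $\trans{\bm{U}_\mathrm{s}}\bm{U}_\mathrm{s} = \bm{I}_P$ gives $\cov{\trans{\bm{U}_\mathrm{s}}\bm{\epsilon}} = \bm{I}_P$. Taking traces then yields
\begin{equation*}
\expectation{\lVert\hat{\bm{W}}-\bm{W}\rVert^2} = \sum_{p=1}^{R}\left(\frac{\sigma^4}{(s_p^2+\sigma^2)^2} + \frac{\sigma^2 s_p^2}{(s_p^2+\sigma^2)^2}\right) = \sum_{p=1}^{R}\frac{\sigma^2}{s_p^2+\sigma^2}.
\end{equation*}
This is at most $R\sigma^2/s_R^2$, which tends to $0$ as $\sigma^2\to 0$ because $s_R > 0$ is fixed. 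Note that $\bm{L}^{(W)}$ does not depend on $\sigma$, since $\bm{R_V}$ and $\bm{P_V}$ are fixed. This gives $L^2$ convergence, and convergence in probability follows from Markov's inequality.

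\textbf{Step 3: module estimates.} By \eqref{eq:module_whitened_representation} and \eqref{eq:mean_whitened_representation},
\begin{equation*}
\hat{\check{\bm{G}}}_i - \check{\bm{G}}_i = \bm{\Gamma}_i^{1/2}(\hat{\bm{W}}_i - \bm{W}_i).
\end{equation*}
Therefore
\begin{equation*}
\expectation{\lVert\hat{\check{\bm{G}}}_i - \check{\bm{G}}_i\rVert^2} \le \lVert\bm{\Gamma}_i^{1/2}\rVert_2^2\,\expectation{\lVert\hat{\bm{W}}-\bm{W}\rVert^2} \to 0,
\end{equation*}
and convergence in probability again follows from Markov's inequality.

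One point needs care: \eqref{eq:module_whitened_representation} must be an exact representation of the prior. This holds for the full factor. For the effective factor it holds only as the model assumption adopted in this subsection, which the proposition inherits.
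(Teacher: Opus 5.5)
Your proof is correct and follows the paper's argument: the rank condition together with nonsingular $\bm{P_V}$ gives full column rank, the SVD form of the posterior mean is used, the expected squared error is shown to vanish, and the result is transferred to the modules through the operator norm of $\bm{\Gamma}_i^{1/2}$. The only difference is that you compute the full mean-squared error $\sum_p \sigma^2/(s_p^2+\sigma^2)$ (shrinkage bias plus noise, with the cross term removed by independence), which makes the $L^2$ convergence of the shrinkage term explicit and gives an $O(\sigma^2)$ rate, whereas the paper handles the shrinkage term by a pointwise limit and computes the second moment only for the noise term.
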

\begin{proof}
	After prewhitening, the model has the form \eqref{eq:prewhitened_model}. Since the whitening matrix \(\bm{P_V}\) is nonsingular, \eqref{eq:prewhitened_L} and \eqref{eq:rank_condition_holds} imply that \(\bm{L}^{(W)}\) has full column rank. The \gls{svd} representation \eqref{eq:svd} then shows that every shrinkage factor converges to one and that every noise term converges to zero in \(L^2\) as \(\sigma^2 \to 0\). Hence \(\hat{\bm{W}} \to \bm{W}\) and since \(\bm{\Gamma}_i^{1/2}\) is a fixed finite-dimensional linear operator, the corresponding module posterior means satisfy \(\hat{\check{\bm{G}}}_i \to \check{\bm{G}}_i\). A complete proof is given in \ref{apx:miso-gptf_consistency}.
\end{proof}

\begin{remark}
	If the forward model is rank-deficient, that is if
	\begin{equation}
		P = \rank{\bm{L}} < R,
	\end{equation}
	then
	\begin{equation}
		\hat{\bm{W}} \to \bm{V}_\mathrm{s}\trans{\bm{V}_\mathrm{s}}\bm{W}
	\end{equation}
	in \(L^2\) as \(\sigma^2 \to 0\). The matrix \(\bm{V}_\mathrm{s}\trans{\bm{V}_\mathrm{s}}\) projects \(\bm{W}\) onto the subspace spanned by the columns of \(\bm{V}_\mathrm{s}\). The columns of \(\bm{V}_\mathrm{s}\) are exactly the right singular vectors of \(\bm{L}^{(W)}\) with nonzero singular values, that is the row space of \(\bm{L}^{(W)}\). Therefore, the estimator only recovers the components of \(\bm{W}\) observable through the forward model, i.e., those not in the nullspace of \(\bm{L}^{(W)}\), and the posterior covariance hence does not collapse along these unobservable directions.
\end{remark}

\begin{remark}
	If the effective representation is used, that is \(\bm{\Gamma}_{i,\varepsilon}^{1/2}\) and \(\bm{L}_\varepsilon\), Proposition~\ref{prop:miso-gptf_consistency} applies to the retained effective components \(\hat{\check{\bm{G}}}_{i,\varepsilon} = \bm{\Gamma}_{i,\varepsilon}^{1/2}\hat{\bm{W}}_{i,\varepsilon}\), and so does not imply recovery of the discarded low-variance directions of the full \gls{gp} prior.
\end{remark}
		\subsection{Closed-Loop Identifiability}\label{sec:closed-loop_identifiability}
Up until this point, we have only focused on the separability and identifiability issue, i.e., if the collected data and setup are able to separate each \gls{tf} in a \gls{miso} system. However, the preceding derivations rely on Assumption~\ref{ass:input_exogeneity}, and so, as a prior step, the structural issue of obtaining valid inputs/regressors for the problem we intend to solve has to be resolved. In other words, we require inputs that excite the dynamics we intend to map, while also ensuring that they are uncorrelated with the added noise.

This issue becomes particularly pertinent when we try to apply the \gls{miso-gptf} estimator to a closed-loop system. The preceding open-loop conditional likelihood is no longer valid as part of the output disturbance is statistically predictable from the measured regressors and may incorrectly be attributed to the module dynamics. This could potentially bias the result.

Thankfully, well-established techniques exist for handling closed-loop bias \cite{ForssellLjung1999}, any number of which are compatible with the \gls{miso-gptf} estimator. For the sake of completeness, we here briefly describe a basic method of obtaining usable regressors relying on least-squares projection. The effects of using structural, i.e., actual module inputs \(u_i\), compared to projected inputs, in the context of the \gls{miso-gptf} estimator, are investigated further in Section~\ref{sec:results}.

If we again consider the time-domain description in \eqref{eq:time_domain_eq}, we can close the loops by introducing the feedback equations
\begin{equation}\label{eq:time_domain_cl}
	u_i(t) = \left(c_i \ast \left(r_i - y_i^{(\mathrm{fb})}\right)\right)(t) \quad \text{for}~i = 1, \dots, m,
\end{equation}
where \(r_i\) are measured, exogenous signals, typically controller references or excitations, and \(c_i\) are feedback controllers. \(y_i^{(\mathrm{fb})}\) is the output signal available to controller \(i\)---be that the \gls{miso} output \(y\) or some other local signals in the general case. From \eqref{eq:time_domain_eq} and \eqref{eq:time_domain_cl} it is evident that the measured inputs generally contain both components generated by the excitation \(r_i\) and by the disturbance \(v_y\). For input \(i = 1, \dots, m\), the decomposition can be denoted as
\begin{equation}\label{eq:input_decomp}
	u_i(t) = u_i^{(r)}(t) + u_i^{(v)}(t),
\end{equation}
where \(u_i^{(r)}\) and \(u_i^{(v)}\) are the components induced by the excitation and disturbance, respectively. We construct regressors that contain predominantly excitation-generated input components via Algorithm~\ref{algo:construct_regressors}.
\begin{algorithm}[tb] 
\caption{Construction of regressors from endogenous inputs}\label{algo:construct_regressors} 
\begin{algorithmic}[1] 
    \Require Structural input set \(\mathcal{U}\) and measured external excitation signals \(r_i\) 
    \Ensure Excitation set \(\mathcal{R}\), relevant input subset \(\mathcal{U}_{\mathcal{R}}\), and projected input set \(\mathcal{U}_{\mathcal{R}}^{(r)}\) 
    
    \State Choose \(\mathcal{R} \gets \{r_1,\ldots,r_q\}\) such that each \(r_j\) excites at least one \(u_i \in \mathcal{U}\) 
    \State Set \(\mathcal{U}_{\mathcal{R}} \gets \{u_i \in \mathcal{U} : \exists r_j \in \mathcal{R},~ \cov{u_i,r_j} \neq 0\}\) 
    \ForAll{\(u_i \in \mathcal{U}_{\mathcal{R}}\)} 
        \State \(u_i^{(r)} \gets \proj{\spanMath{\mathcal{R}}}{u_i}\) 
    \EndFor 
    \State \(\mathcal{U}_{\mathcal{R}}^{(r)} \gets \{u_i^{(r)} : u_i \in \mathcal{U}_{\mathcal{R}}\}\)
\end{algorithmic} 
\end{algorithm}

Specifically, step 4 in Algorithm~\ref{algo:construct_regressors} is performed by first denoting the vector of the \(q\) excitation signals as
\begin{equation}
	\bm{r}(t) = \trans{
	\begin{bmatrix}
		r_1(t) & \cdots & r_q(t)
	\end{bmatrix}
	} \in \mathbb{R}^q.
\end{equation}
For a selected maximum lag \(M\), we then define the regression vector as
\begin{equation}\label{eq:excitation_regressor}
	\bm{\phi}_r(t) = \trans{
	\begin{bmatrix}
		r_1(t) & \cdots & r_1(t - M) & r_2(t) & \cdots & r_q(t - M)
	\end{bmatrix}
	} \in \mathbb{R}^{q(M + 1)},
\end{equation}
in other words, lagged excitation samples are grouped first by excitation signal and then by lag, giving \(M + 1\) coefficients associated with each of the \(q\) excitations. The zero-lag term is included as we permit instantaneous excitation-to-input contributions. The recordings each contain \(N\) samples, and in order to ensure that all lagged values in \eqref{eq:excitation_regressor} are fully defined, we exclude the first \(M\) samples, leaving \(N - M\) usable data points for estimation. We then stack the regression vectors row-wise, giving the regression matrix
\begin{equation}
	\bm{\Phi}_r = \trans{
	\begin{bmatrix}
		\trans{\bm{\phi}_r(M)} & \trans{\bm{\phi}_r(M + 1)} & \cdots & \trans{\bm{\phi}_r(N - 1)}
	\end{bmatrix}
	} \in \mathbb{R}^{(N - M)\times(q(M + 1))},
\end{equation}
and collect the corresponding input measurements in
\begin{subequations}
\begin{align}
	\bm{u} &= \trans{
	\begin{bmatrix}
		\trans{\bm{u}(M)} & \trans{\bm{u}(M + 1)} & \cdots & \trans{\bm{u}(N - 1)}
	\end{bmatrix}
	} \in \mathbb{R}^{(N - M)\times m}~\text{where} \\
	\bm{u}(t) &= \trans{
	\begin{bmatrix}
		u_1(t) & \cdots & u_m(t)
	\end{bmatrix}
	} \in \mathbb{R}^m.
\end{align}
\end{subequations}
The \gls{fir} coefficients are then obtained by solving the multi-output least-squares problem
\begin{equation}\label{eq:least_squares}
	\hat{\bm{\Theta}} = \argmin_{\bm{\Theta}} \lVert\bm{\Phi}_r\bm{\Theta} - \bm{U}\rVert_\mathrm{F}^2
\end{equation}
where \(\hat{\bm{\Theta}} \in \mathbb{R}^{q(M + 1)\times m}\) is the vector of the estimated \gls{fir} coefficients and \(\lVert\cdot\rVert_\mathrm{F}\) denotes the Frobenius norm. Generally, the solution can be obtained using the pseudoinverse---denoted with \(\dagger\)---as
\begin{equation}
	\hat{\bm{\Theta}} = \bm{\Phi}_r^\dagger \bm{U},
\end{equation}
or if the regression matrix has full column rank, as
\begin{equation}
	\hat{\bm{\Theta}} = \inv*{\trans{\bm{\Phi}_r}\bm{\Phi}_r} \trans{\bm{\Phi}_r}\bm{U}.
\end{equation}
The projected inputs are then finally calculated as
\begin{equation}
	\tilde{\bm{U}} = \bm{\Phi}_r \hat{\bm{\Theta}} \in \mathbb{R}^{(N - M)\times m}.
\end{equation}
The projection residual is
\begin{equation}
	\bm{U}_\perp = \bm{U} - \tilde{\bm{U}},
\end{equation}
and by \eqref{eq:least_squares}, it satisfies \(\trans{\bm{\Phi}_r}\bm{U}_\perp = \bm{0}\). If the external excitation is independent of the disturbance, then, for a linear closed-loop system and a sufficiently flexible model, the projected inputs approximately retain the excitation-induced components while rejecting the process disturbance-induced component. Therefore,
\begin{equation}
	\tilde{u}_i(t) \approx u_i^{(r)}(t).
\end{equation}

Once suitable time-domain regressors have been obtained, the \gls{miso-gptf} estimation procedure continues as described in Section~\ref{sec:problem_formulation}, regardless of whether we use \(u_i\) or \(\tilde{u}_i\) in its place.
	\section{Results}\label{sec:results}
In the following section, we introduce a \gls{miso} power system case study, as well as perform \gls{mc} simulations to check the claims of Theorem~\ref{theo:exact_identifiability} and the closed-loop handling proposed in Section~\ref{sec:closed-loop_identifiability}.

\subsection{Dynamic Power System Model}\label{sec:power_system}
Our test case describes the simplified frequency-active power dynamics of a two-bus power system where two generating units are connected through a high voltage AC transmission line, neglecting losses. Let
\begin{equation}
	\bm{Y}(s) = \trans{
	\begin{bmatrix}
		Y_1(s) & Y_2(s)
	\end{bmatrix}
	}
\end{equation}
denote the vector of local per-unit angular frequency deviations, and let
\begin{equation}
	\bm{U}(s) = \trans{
	\begin{bmatrix}
		U_1(s) & U_2(s)
	\end{bmatrix}
	}
\end{equation}
denote the vector of controllable power injections. \(s\) is the Laplace variable. The inputs and outputs at bus 1 are then related by the linearized swing dynamics
\begin{subequations}\label{eq:swings}
\begin{equation}\label{eq:swing1}
    sM_1 Y_1(s) = U_1(s) - D_1Y_1(s) - P_{12}(s)
\end{equation}
where \(M_1\) and \(D_1\) are, respectively, inertia and damping coefficients and \(P_{12}\) is the incremental active power transferred from bus 1 to bus 2. The corresponding dynamics for bus 2 are
\begin{equation}\label{eq:swing2}
    sM_2 Y_2(s) = U_2(s) - D_2Y_2(s) + P_{12}(s)
\end{equation}
\end{subequations}
where \(M_2\) and \(D_2\) are again inertia and damping coefficients. The opposite signs of \(P_{12}\) in \eqref{eq:swing1} and \eqref{eq:swing2} follow from the selected positive power flow direction.

For a lossless transmission line and small deviations around the operating point, the incremental power change can be expressed as
\begin{equation}\label{eq:power_transfer}
    P_{12}(s) = \frac{K_{12}}{s}(Y_1(s) - Y_2(s))
\end{equation}
where \(K_{12} > 0\) is the synchronizing coefficient. Substituting \eqref{eq:power_transfer} into \eqref{eq:swings} gives us the system representation
\begin{equation}\label{eq:inv_swing}
	\begin{bmatrix}
		U_1(s) \\
		U_2(s)
	\end{bmatrix}
	=
	\begin{bmatrix}
		sM_1 + D_1 + K_{12}/s & -K_{12}/s \\ 
		-K_{12}/s & sM_2 + D_2 + K_{12}/s \\ 
	\end{bmatrix}
	\begin{bmatrix}
		Y_1(s) \\
		Y_2(s)
	\end{bmatrix}.
\end{equation}
Equation~\eqref{eq:inv_swing} describes the transfer from outputs to inputs. We therefore invert the transfer matrix in \eqref{eq:inv_swing} and get the disturbance-free power-to-frequency dynamics
\begin{subequations}\label{eq:swings_matrix}
\begin{align}
    \bm{Y}_0(s) &= \bm{\Psi}(s)\bm{U}(s)~\text{where}\\
	\bm{\Psi}(s) &= \frac{1}{\alpha(s)}
	\begin{bmatrix}
		s^2M_2 + sD_2 + K_{12} & K_{12} \\
		K_{12} & s^2M_1 + sD_1 + K_{12}
	\end{bmatrix}~\text{and} \\
	\begin{split}
		\alpha(s) &= s^3(M_1M_2) + s^2(M_1D_2 + M_2D_1) \\
		&\qquad + s(M_1K_{12} + M_2K_{12} + D_1D_2) + (D_1 + D_2)K_{12}.
	\end{split}
\end{align}
\end{subequations}
\(\bm{Y}_0(s)\) is the disturbance-free output vector. Therefore, in order to agree with the model formulation introduced in Section~\ref{sec:problem_formulation}, the observed outputs are modeled as
\begin{subequations}\label{eq:system_eq}
\begin{align}
    \bm{Y}(s) &= \bm{\Psi}(s)\bm{U}(s) + \bm{V}(s)~\text{where} \\
    \bm{V}(s) &= \trans{
	\begin{bmatrix}
		V_1(s) & V_2(s)
	\end{bmatrix}
	}.
\end{align}
\end{subequations}
\(\bm{V}\) is an equivalent output disturbance expressed in the same units as \(\bm{Y}\).

We can close the loop by introducing local feedback controllers
\begin{equation}
	\bm{C}(s) =
	\begin{bmatrix}
		C_1(s) & 0 \\
		0 & C_2(s)
	\end{bmatrix},
\end{equation}
where each controller \(C_1(s)\) and \(C_2(s)\) takes in the difference between an exogenous reference/excitation signal
\begin{equation}
	\bm{R}(s) = \trans{
	\begin{bmatrix}
		R_1(s) & R_2(s)
	\end{bmatrix}
	}
\end{equation}
and the output \(\bm{Y}\), resulting in the control equation
\begin{equation}\label{eq:control_eq}
	\bm{U}(s) = \bm{C}(s)(\bm{R}(s) - \bm{Y}(s)).
\end{equation}

At bus 1, the controller is represented by the governor-turbine model
\begin{equation}
	C_1(s) = \frac{K_\mathrm{g}}{(1 + sT_\mathrm{g})(1 + sT_\mathrm{t})},
\end{equation}
where \(T_\mathrm{g}\) and \(T_\mathrm{t}\) are, respectively, the equivalent governor and turbine time constants, and \(K_\mathrm{g}\) is the governor gain. The signal \(U_1\) is then the incremental mechanical power supplied by the generator. We let the bus 2 dynamics represent a grid-forming inverter, such that \(M_2\) and \(D_2\) are virtual inertia and damping, and the controller is a simple fast filtered droop controller, specifically
\begin{equation}
	C_2(s) = \frac{K_\mathrm{d}}{(1 + sT_\mathrm{c})}.
\end{equation}
\(K_\mathrm{d}\) is the inverter gain and \(T_\mathrm{c}\) is the filter time constant. \(U_2\) is then the incremental active power injection by the inverter. The values used for the power system case are given in \ref{apx:sim_params}.

Combining \eqref{eq:system_eq} and \eqref{eq:control_eq}, as well as dropping the explicit dependencies on \(s\) in order to ease notation, gives the closed-loop equation
\begin{subequations}
\begin{align}
    \bm{Y}(s) &= \bm{\Psi}\bm{C}(\bm{R} - \bm{Y}) + \bm{V} \\
    &= \inv*{\bm{I}_2 + \bm{\Psi}\bm{C}}\bm{\Psi}\bm{C}\bm{R} + \inv*{\bm{I}_2 + \bm{\Psi}\bm{C}}\bm{V}.
\end{align}
\end{subequations}
The measured structural inputs are
\begin{equation}
    \bm{U}(s) = \inv*{\bm{I}_2 + \bm{C}\bm{\Psi}}\bm{C}\bm{R} - \inv*{\bm{I}_2 + \bm{C}\bm{\Psi}}\bm{C}\bm{V},
\end{equation}
which corresponds directly with \eqref{eq:input_decomp}.

We designate \(Y_1\) as the single output of interest, reducing the power system model to a \gls{miso} system. Specifically, the first row of \eqref{eq:swings_matrix} gives
\begin{subequations}
\begin{align}
    Y_1(s) &= G_1(s)U_1 + G_2(s)U_2 + V_1~\text{with} \\
    G_1(s) &= \frac{s^2M_2 + sD_2 + K_{12}}{\alpha(s)}~\text{and} \\
    G_2(s) &= \frac{K_{12}}{\alpha(s)}.
\end{align}
\end{subequations}
\(G_1\) and \(G_2\) are the \glspl{tf} we aim to identify. 

We consider two configurations of the system: one where the controllers are active---preserving the closed-loop nature of the system---and the structural inputs are generated according to \eqref{eq:control_eq}, and one open-loop configuration where the feedback paths are disconnected and \(U_1\) and \(U_2\) are applied directly. Doing so lets us separately investigate the two independent problems of \gls{miso-gptf} estimation, namely fulfilling the requirements in Theorem~\ref{theo:exact_identifiability} and constructing valid regressors for closed-loop systems. An overview of the test system is shown in Fig.~\ref{fig:test_system}.
\begin{figure}[tb]
	\centering
	\resizebox{\linewidth}{!}{
	\ctikzset{bipoles/thickness=0.75}
	\tikzstyle{block} = [draw, rectangle, minimum height=1.5cm, text width=2.5cm, line width=0.7pt, align=center]
	\tikzstyle{smallblock} = [draw, rectangle, minimum height=0.75cm, text width=1.5cm, line width=0.7pt, align=center]
	\tikzstyle{miniblock} = [draw, rectangle, minimum height=0.75cm, text width=0.75cm, line width=0.7pt, align=center]
	\tikzstyle{branch} = [circle, inner sep=0pt, minimum size=1mm, fill=black, draw=black]
	\tikzstyle{connector} = [->, line width=0.7pt]
	\tikzstyle{line} = [line width=0.7pt]
	\tikzstyle{dummy} = [inner sep=0pt, minimum size=0pt]
	\tikzstyle{inout} = []
	\tikzstyle{sum} = [circle, inner sep=0pt, minimum size=2mm, draw=black, line width=0.7pt]
	\tikzstyle{product} = [rectangle, inner sep=0pt, minimum size=10mm, draw=black, line width=0.7pt]
	
	\subfloat[]{
	\begin{circuitikz}[line width=0.75, >=stealth, node distance=1cm, every node/.append style={transform shape}, american]
  		\tikzstyle{every node}=[font=\Large]
		
		\node[inout] (u1) {$U_1$};
    	\node[block] (g1) [right=of u1, color=blue] {$G_1(s)$};
    	\node[sum] (s1) [right=of g1] {};
    	\node[sum] (s2) [right=of s1] {};
    	\node[inout] (v1) [below=of s2] {$V_1$};
    	\node[inout] (y) [right=of s2] {$Y$};
        
    	\node[block] (g2) [below=of g1, color=blue] {$G_2(s)$};
    	\node[inout] (u2) [left=of g2] {$U_2$};
        
        \draw[connector] (u1) -- (g1);
        \draw[connector] (g1) -- (s1);
        \draw[connector] (s1) -- (s2);
        \draw[connector] (s2) -- (y);
        \draw[connector] (v1) -- (s2);
        \draw[connector] (u2) -- (g2);
        \draw[connector] (g2) -| (s1);
	\end{circuitikz}}
	\hfill
  	\subfloat[]{
  	\begin{circuitikz}[line width=0.75, >=stealth, node distance=1cm, every node/.append style={transform shape}, american]
		\tikzstyle{every node}=[font=\Large]
		
		\node[inout] (r1) {$R_1$};
        \node[sum] (s1) [right=of r1] {};
        \node[dummy] (d1s1) [above right=0.05cm and 0.05cm of s1] {-};
        \node[block] (c1) [right=of s1] {$C_1(s)$};
        \node[branch] (br1) [right=of c1] {};
        \node[block, color=blue] (g1) [right=of br1] {$G_1(s)$};
        \node[sum] (s2) [right=of g1] {};
        \node[sum] (s3) [right=of s2] {};
        \node[inout] (v1) [below=of s3] {$V_1$};
        \node[branch] (br2) [right=of s3] {};
        \node[inout] (y) [right=of br2] {$Y_1 = Y$};

        \node[block, color=blue] (g12) [below=of g1] {$G_2(s)$};
        \node[block] (g21) [below=of g12] {$\psi_{21}(s)$};

        \node[block] (g2) [below=of g21] {$\psi_{22}(s)$};
        \node[dummy] (d1) [left=of g2] {};
        \node[branch] (br4) [left=0.45 of d1] {};
        \node[block] (c2) [left=of d1] {$C_2(s)$};
        \node[sum] (s4) [left=of c2] {};
        \node[dummy] (d1s4) [below right=0.05cm and 0.05cm of s4] {-};
        \node[inout] (r2) [left=of s4] {$R_2$};
        \node[sum] (s5) [right=of g2] {};
        \node[sum] (s6) [right=of s5] {};
        \node[inout] (v2) [above=of s6] {$V_2$};
        \node[dummy] (d1s6) [above right=0.05cm and 0.5cm of s6] {$Y_2$};
    
        \draw[connector] (r1) -- (s1);
        \draw[connector] (s1) -- (c1);
        \draw[line] (c1) -- (br1);
        \draw[connector] (br1) -- (g1) node[pos=0, above] {$U_1$};
        \draw[connector] (g1) -- (s2);
        \draw[connector] (s2) -- (s3);
        \draw[line] (s3) -- (br2);
        \draw[connector] (br2) -- (y);
        \draw[connector] (v1) -- (s3);
        \draw[connector] (br2) |- ++(0, 1.5) -| (s1);
        \draw[connector] (br1) |- (g21);
        \draw[connector] (g12) -| (s2);

        \draw[connector] (r2) -- (s4);
        \draw[connector] (s4) -- (c2);
        \draw[connector] (c2) -- (g2) node[midway, above] {$U_2$};
        \draw[connector] (g2) -- (s5);
        \draw[connector] (s5) -- (s6);
        \draw[connector] (v2) -- (s6);
        \draw[connector] (s6) -- ++(1.175, 0) |- ++(0, -1.5) -| (s4);
        \draw[connector] (br4) |- (g12);
        \draw[connector] (g21) -| (s5);
	\end{circuitikz}}
	}
  \caption{\gls{miso} test system modeling the frequency-active power dynamics of a two-bus power system. (a) is the open-loop configuration where \(U_1\) and \(U_2\) are exogenous and applied directly, and (b) is the closed-loop configuration where \(R_1\) and \(R_2\) are applied externally, while \(U_1\) and \(U_2\) become endogenous signals. The \glspl{tf} we aim to identify are \textcolor{blue}{\(G_1\)} and \textcolor{blue}{\(G_2\)}, and \(\psi_{21}\) and \(\psi_{22}\) are the second-row elements in \eqref{eq:swings_matrix}.}
	\label{fig:test_system}
\end{figure}

\subsection{Simulation Setup}
For both system configurations, we set up simulations with sampling period \(T_\mathrm{s} = \SI{0.02}{\second}\) and measurement length \(T = \SI{300}{\second}\). The external excitations \(r_i\) and disturbance signals \(v_i\) are all zero-mean band-limited white noise with distinct realizations. After having applied the \gls{dft}, the upper angular frequency of interest is set to \SI{10}{\radian\per\second} and the same \(F = 475\) logarithmically-spaced angular frequency points are used for both training and testing\footnote{Depending on the needs of the user, the training and testing frequencies need not coincide. See~\cite{NarayanEtAl2026} for further details.}.

For the open-loop tests, we vary how similar the two inputs are in order to generate input-output tuples that either pass or fail the exact identifiability condition. Specifically, we set up two independent inputs \(u_1\) and \(u_\perp\), and generate input \(u_2\) from
\begin{equation}
	u_2(t) = \beta u_1(t) + \sqrt{1 - \beta^2}u_\perp(t).
\end{equation}
\(u_\perp\) is a dummy input that is only there to provide \(u_2\) with content that is more or less similar to \(u_1\), depending on the parametrization of \(\beta\in [0, 1]\). \(u_\perp\) and \(u_1\) are independent zero-mean band-limited white noise signals. When \(\beta = 0\), \(u_1\) and \(u_2\) are completely independent, and when \(\beta = 1\), they are identical. We run simulations for \(\beta \in \{0.0, 0.5, 0.9, 0.99, 1.0\}\). For each of the input-output sets, we then perform \gls{miso-gptf} estimation, evaluate the exact identifiability condition, and compare the module estimate to the true module using the fit
\begin{equation}\label{eq:fit}
    \rho_i = \maxMath{1 - \frac{\lVert \hat{\bm{G}}_i - \bm{G}_i \rVert_2}{\left\lVert \bm{G}_i \right\lVert_2}, 0},\quad \left\| \bm{G}_i \right\|_2 > 0
\end{equation}
where \(\left\lVert \cdot \right\rVert_2\) is the \(\mathrm{L}^2\)-norm.

For the closed-loop tests, we vary the \gls{snr}, defined as the ratio between the variances of the excitation- and disturbance-contributions to input 1. For each \gls{snr}, we compare the module fits achieved by using the endogenous structural inputs as regressors, to the fit achieved by projecting the inputs on the exogenous excitation signals. We test with \glspl{snr} \qtylist[list-units = single]{20; 10; 6; 3; 0}{\decibel}, all with independent excitation signals. 

We perform 100 \gls{mc} runs for each open- and closed-loop test where the random exogenous signals are regenerated for each run. Throughout, we use the diagonal correlated covariance function for modules 0 (the transient) and 2, and the stable spline covariance function for module 1. References~\cite{LataireChen2016, NarayanEtAl2026} contain further analysis of covariance function choice for frequency-domain estimation. Note that we only perform hyperparameter optimization for the open-loop case where \(\beta = 0\) and then reuse the hyperparameters for the subsequent input correlation and \gls{snr} sweeps. Doing so allows the \gls{gp} prior to adapt to each random realization while ensuring that changes in the evaluation metrics can be attributed to the controlled experimental variable, rather than case-specific changes in the prior-supported module spaces.

\subsection{Open-Loop Identifiability Results}
Fig.~\ref{fig:ol_fits} shows the empirical distribution functions for module 1 and 2 fits, separated into \gls{mc} runs that do and do not pass the exact identifiability requirement. For the cases that fulfill the requirement, the fits are strongly concentrated near the ideal 1.0 from below, for both modules. For the cases that do not pass, more than half have module 1 fits at 0.0, with a considerable jump at around 0.2. The failed module 1 fits all fall below a 0.65 fit. The failed module 2 fits are overall stronger with a median value just under 0.8. Still, the maximum achieved failed module 2 fit is 0.94---below the minimum achieved fit for any of the cases that pass---giving a meaningful fit separation between the pass and fail runs.
\begin{figure}[tb]
    \centerline{\includegraphics[width=\linewidth]{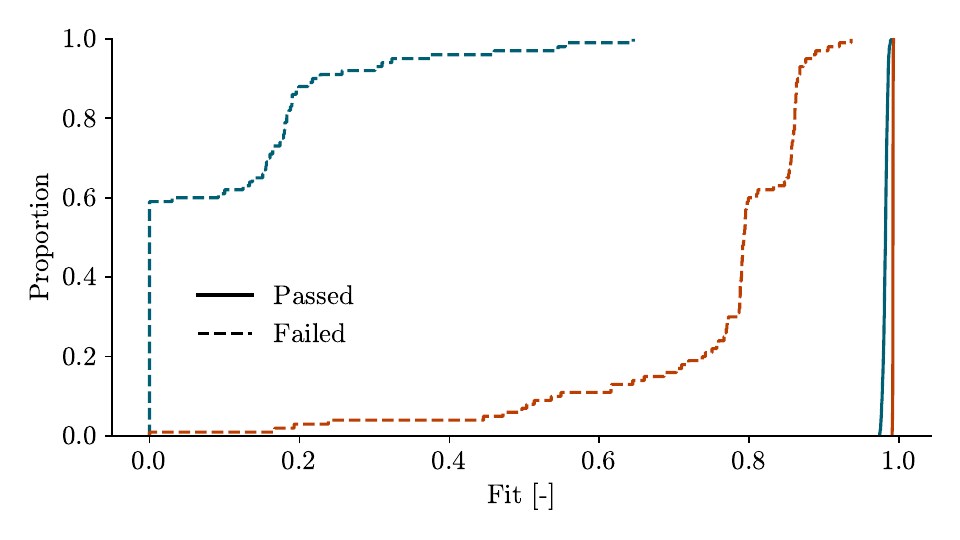}}
    \caption{Empirical distribution functions of \textcolor{blue}{module 1} and \textcolor{orange}{module 2} fits, split into cases that fulfill and cases that do not fulfill the exact identifiability criterion. Among the \gls{mc} runs, cases that pass the requirement all have close to ideal fits, while the cases that do not pass the criterion have considerably worse fits.}
    \label{fig:ol_fits}
\end{figure}

A different perspective can be obtained by considering Fig.~\ref{fig:ol_module_fit_comparison} where the module 2 fits are plotted against the module 1 fits. It becomes clear that in order to have satisfactory fits for both modules, the rank requirement must be fulfilled. A considerable number of the \gls{mc} runs resulted in module 2 fits around 0.8 while failing the rank requirement. However, these runs struggled to achieve module 1 fits beyond 0.3, with major clusters at 0.0 and around 0.2. 
\begin{figure}[tb]
    \centerline{\includegraphics[width=\linewidth]{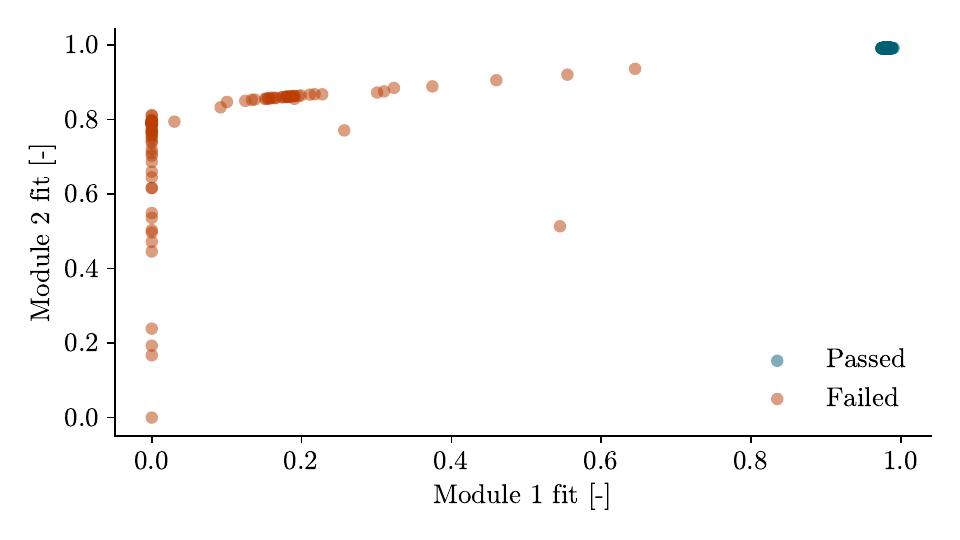}}
    \caption{Scatter plot of module 2 fit against module 1 fit from \gls{mc} runs. Module 2 fits are generally stronger than module 1 fits, but close to ideal fits for both modules is only achieved when the exact identifiability condition is satisfied.}
    \label{fig:ol_module_fit_comparison}
\end{figure}

A further illustration of the contrast between passing and not passing the exact identifiability criterion is given in Fig.~\ref{fig:ol_bode} where the \gls{miso-gptf} estimates of representative \gls{mc} runs are shown. The passing setup with \(\beta = 0\) achieves fits of 0.99 for both modules, while the setup with \(\beta = 1\) results in fits of 0.0 for module 1 and 0.79 for module 2. Along with the module estimates, Fig.~\ref{fig:ol_bode} also includes posterior samples of all modules, including the transient 0-module, illustrating the uncertainty in the posterior estimates. Fig.~\ref{fig:ol_bode} highlights not only a clear difference in the accuracy of the estimation, depending on the rank requirement being fulfilled, but also how the posterior uncertainty shrinks considerably in the passed setup, compared to the failed. In the failed case, the module 1 estimate is consistently poor across the considered frequency range, while the module 2 estimate matches the true frequency response quite well up until about \SI{1e0}{\radian\per\second} before deteriorating. The behavior is consistent with the dynamics of the test system: both modules share an oscillatory mode near \SI{1.5}{\radian\per\second}, beyond which the \gls{tf} magnitudes decrease and their contributions to the measured output become progressively weaker. Consequently, the effective \glspl{snr} decrease, while disturbance, transients, and prior regularization become relatively more influential.
\begin{figure}[tb]
  \centering
  \subfloat[]{
    \includegraphics[width=0.9\linewidth]{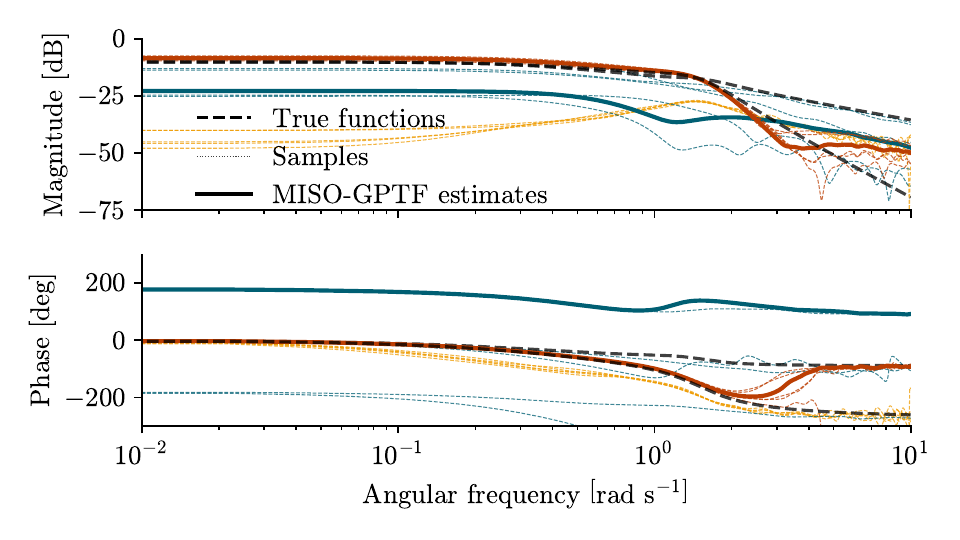}
    \label{fig:ol_bode_failed}}
  \hfill
  \subfloat[]{
    \includegraphics[width=0.9\linewidth]{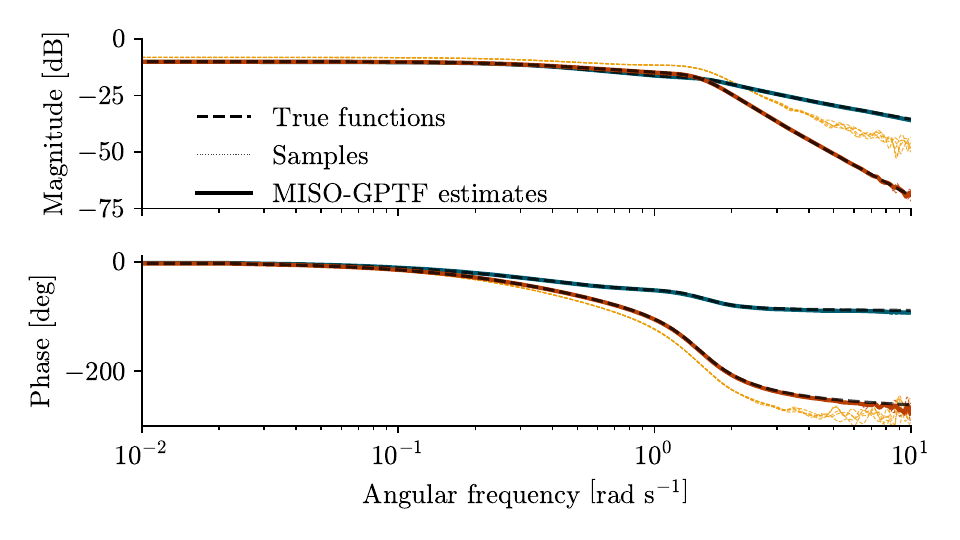}
    \label{fig:ol_bode_passed}}
  \caption{Example of \gls{miso-gptf} estimates along with true frequency responses and posterior samples for \textcolor{blue}{module 1} and \textcolor{orange}{module 2}. Posterior samples of \textcolor{yellow}{module 0 (transient term)} are also included. (a) shows estimated \glspl{tf} from when the exact identifiability requirement is not met, resulting in poor fits, and (b) shows the resulting well-fitted estimated \glspl{tf} from when the requirement is met. Note the difference in phase axis limits between (a) and (b).}
  \label{fig:ol_bode}
\end{figure}

\subsection{Closed-Loop Projection Results}
Turning our attention to Fig.~\ref{fig:cl_snr_sweep}, we see that, regardless of whether we use the projected or direct inputs, the fit of module 2 is consistently better than the module 1 fit. The discrepancy is reduced as the \gls{snr} increases before it all but disappears at \SI{20}{\decibel}. Still, the achieved fit is strongly dependent on \gls{snr}---even more so for module 1 than for module 2. Continuing on to the projection advantage, we observe the strongest effect for lower \glspl{snr}. One possible explanation is that as the \gls{snr} increases, the \gls{fir} approximation error becomes relatively more dominant compared to the closed-loop bias and we get a reduced or even detrimental effect from projecting. Moreover, module 2 fit has a slight concave trend, peaking at \gls{snr} \SI{10}{\decibel}, but the effect of projecting is consistently negative. There are \gls{mc} runs where the advantage is positive, but it is minimal and the median is consistently below zero. The effect of projecting is therefore channel-dependent, showing that constructing exogenous regressors does not automatically improve all module estimates.
\begin{figure}[tb]
  \centering
  \subfloat[]{
    \includegraphics[width=0.9\linewidth]{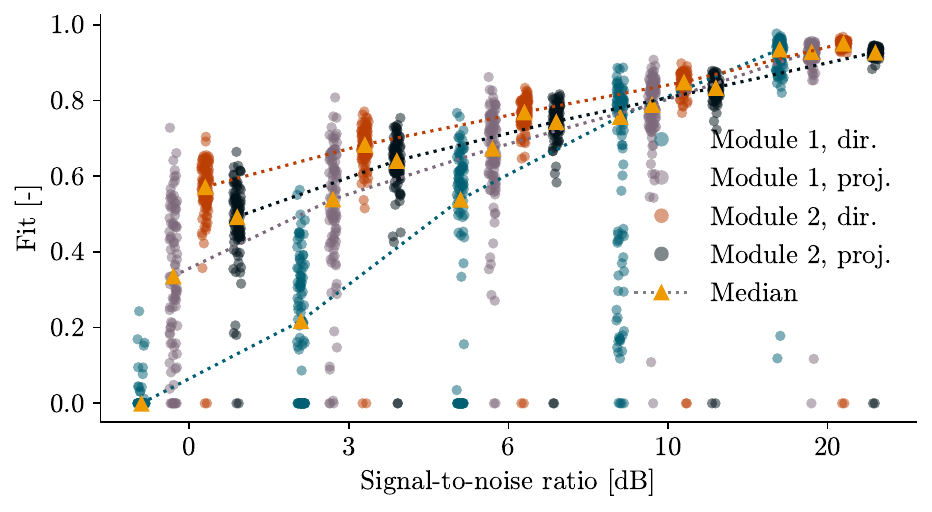}
    \label{fig:cl_fits}}
  \hfill
  \subfloat[]{
    \includegraphics[width=0.9\linewidth]{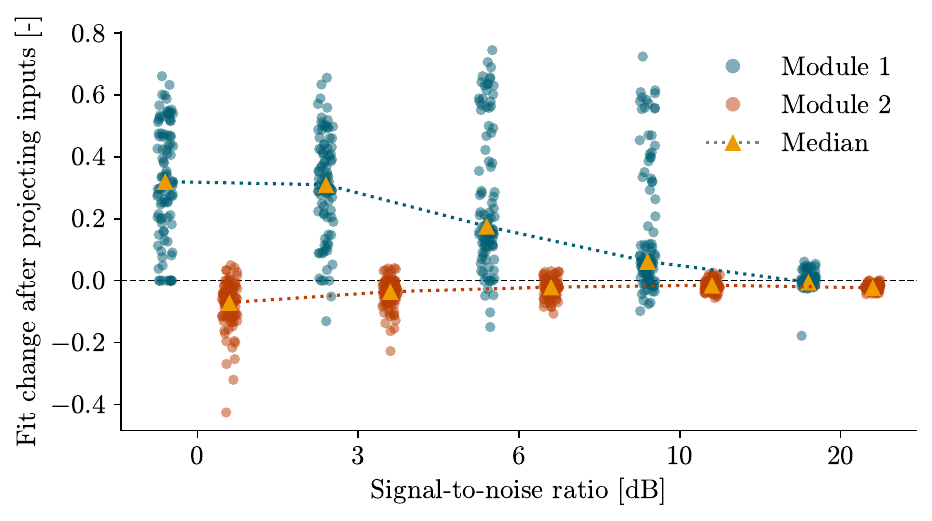}
    \label{fig:cl_fits_change}}
  \caption{Effect of input projection on closed-loop identification performance across \gls{snr} levels. (a) shows the module fits obtained with direct and projected inputs, and (b) the change in module fits after projecting. The median values across the \gls{mc} runs are included as well. Increasing \gls{snr} is correlated with increasing fits and decreasing fit differences between using direct and projected inputs. The effects are stronger for \textcolor{blue}{module 1} than for \textcolor{orange}{module 2}. Projecting provides the greatest improvements for \textcolor{blue}{module 1} at low \gls{snr}, but the benefit decreases as \gls{snr} increases. For \textcolor{orange}{module 2}, projecting has a small yet consistently negative median effect.}
  \label{fig:cl_snr_sweep}
\end{figure}
\FloatBarrier
	\section{Discussion}\label{sec:discussion}
Despite the presented work demonstrating the validity of the exact identifiability criterion, several open questions and limitations remain.

Firstly, the open-loop module fits hint that the rank requirement of exact identifiability may be too strict, meaning the \gls{miso-gptf} estimator could potentially achieve acceptable fits---possibly only in limited frequency ranges---even when failing the requirement. Despite the inputs being different and not passing exact identifiability, the module priors being different enough may still make it possible to separate and identify each module. For practical \gls{miso} identification, introducing the notion of stable identifiability and deriving an identifiability requirement in terms of singular values which would allow a more nuanced consideration of when the \gls{miso-gptf} estimator is expected to fail---as well as quantify how prior- or data-dependent the estimates are---is a potentially fruitful path.

Secondly, the results highlight that module separability and regressor selection are distinct problems. A full-rank design can still yield poor closed-loop estimates if the regressors are endogenous and contaminated. Conversely, exogenous regressors cannot recover modules whose contribution spaces are not separable. Still, despite the derivation of the estimator assuming independence between the regressors and disturbances, the results are consistent with a trade-off between the closed-loop bias and approximation and estimation errors introduced when constructing projected regressors. In practical terms, the bias may very well be acceptable, given a sufficiently high \gls{snr}, where the benefit of projecting is reduced. Moreover, the deviating results between the modules also suggest that each channel should be treated more individually. This motivates an adaptive or iterative approach where we first start with only direct regressors, before projecting or otherwise constructing corresponding regressors channel-wise.

Moreover, throughout this work, from the choice of covariance functions to the derivations of the \gls{miso-gptf} convergence, the module estimates are treated as independent. However, if similarities are known or suspected a priori, a covariance function encoding relationships between the modules could improve the estimation by allowing the estimator to borrow information between the modules using coregionalization matrices~\cite{BonillaEtAl2007}.
	\section{Conclusions}\label{sec:conclusions}
The preceding work has presented a data-driven approach to identifying closed-loop \acrlong{miso} systems using \acrlongpl{gp}, extending existing \acrlong{siso} approaches. The approach results in continuous frequency-domain \acrlongpl{tf}, along with quantified uncertainty bounds. In addition to formulating the estimator, we have derived under what conditions the separate modules are exactly identifiable, showing that the measured input spectra and covariance function choice play a key role. Moreover, we have shown how conventional closed-loop bias handling can be employed to extend the range of problems to which the estimator can be applied by adjusting the regressors used. Through \acrlong{mc} simulations we demonstrated that the theoretical identifiability requirement strongly predicts the fit of the system modules. 

For the closed-loop case studies, we observed a trade-off between the closed-loop bias associated with using endogenous inputs and the error introduced in constructing projected exogenous regressors. Projecting improved the estimates of module 1 most clearly at low \gls{snr}, while its advantage decreased with increasing \gls{snr}. For module 2, on the other hand, projecting had a small but consistently detrimental effect on model fit. The results indicate that regressor construction should be treated on a per-channel basis, rather than as a default preprocessing step.

Further work will focus on refining the developed binary pass/fail requirement into a continuous metric in conjunction with robust regressor selection and handling.
	
	\section*{Acknowledgments}
This work is supported by the Research Council of Norway and Statkraft Energi AS through project number 346334.

	\appendix
	\section{Proof of Theorem~\ref{theo:exact_identifiability}}\label{apx:exact_identifiability}
The proof consists of two independent stages:
\begin{enumerate}[label=(\arabic*)]
	\item Disentangling the outputs. Given an observation \(Y = \sum_i Z_i\), can each contribution \(Z_i\) be uniquely determined from \(Y\)?
	\item Inverting each contribution. Once \(Z_i\) is known, can we uniquely solve \(\mathcal{M}_iG_i = Z_i\) for \(G_i\)?
\end{enumerate}
Only when both are satisfied can we uniquely recover the original transfer function tuple \((G_0,\dots, G_m)\). 

First consider contribution identifiability. We will show that the contributions forming a direct sum and being unique are equivalent. Suppose \(\mathcal{S}_0 + \dots + \mathcal{S}_m\) is a direct sum. If
\begin{equation}
	Y = Z_0 + \dots + Z_m = \tilde{Z}_0 + \dots + \tilde{Z}_m, \quad Z_i, \tilde{Z}_i \in \mathcal{S}_i,
\end{equation}
i.e., there are two ways of constructing a given \(Y\), then
\begin{equation}
	0 = (Z_0 - \tilde{Z}_0) + \dots + (Z_m - \tilde{Z}_m).
\end{equation}
Since \(\mathcal{S}_i\) is a linear subspace, \(Z_i - \tilde{Z}_i \in \mathcal{S}_i\), and directness implies
\begin{equation}
	0 = Z_i - \tilde{Z}_i \quad \implies \quad Z_i = \tilde{Z}_i, \quad i = 0, \dots, m.
\end{equation}
In other words, each contribution \(Z_i\) is unique.
	
Conversely, suppose each contribution is unique. To show that their sum is then direct, suppose there is an arbitrary collection of \(s_i \in \mathcal{S}_i\) satisfying \(s_0 + \dots + s_m = 0\). The zero output then has two decompositions:
\begin{equation}
	0 = 0 + \dots + 0 \quad \text{and} \quad 0 = s_0 + \dots + s_m.
\end{equation}
Uniqueness then implies that the corresponding contributions are equal, so \(s_i = 0,~i = 0, \dots, m\), and the sum is therefore direct.
	
We now move on to module identifiability. In order to recover the original modules, in addition to the direct sum condition, every \(\mathcal{M}_i\) must be injective. If \(\mathcal{M}_i\) fails to be injective, there exist distinct transfer functions \(G_i \neq \tilde{G}_i\) with \(\mathcal{M}_iG_i = \mathcal{M}_i\tilde{G}_i = Z_i\), making it impossible to know which transfer function gave rise to the output contribution.
	
We will show that injectivity of \(\mathcal{A}\) implies injectivity of each \(\mathcal{M}_i\) and that the module spaces \(\mathcal{S}_i\) form a direct sum. Starting with necessity and assuming \(\mathcal{A}\) is injective, pick an arbitrary module \(j \in \{0,\dots, m\}\) and suppose that \(\mathcal{M}_jG_j = 0\). For \(\mathcal{M}_j\) to be injective, the only transfer function that can be mapped to zero under \(\mathcal{M}_j\) is the zero function. Consider the tuple whose \(j\)th component is \(G_j\) and whose remaining components are zero. Using \eqref{eq:observation-operator} gives
\begin{equation}
	\mathcal{A}(0, \dots, 0, G_j, 0, \dots, 0) = \mathcal{M}_jG_j = 0.
\end{equation}
This tuple then maps to zero under \(\mathcal{A}\). However, we assumed \(\mathcal{A}\) to be injective, so the only tuple \(\mathcal{A}\) maps to zero is the zero tuple. Therefore, for an arbitrary \(j\), \(G_j = 0\) if \(\mathcal{A}\) is injective. This again makes \(\mathcal{M}_j\) injective since only \(G_j = 0\) makes \(\mathcal{M}_jG_j = 0\). 

Next, we show that \(\mathcal{A}\) being injective implies that the \(\mathcal{S}_i\) form a direct sum. Suppose we have a collection of \(s_i \in \mathcal{S}_i\) which satisfies \(s_0 + \dots + s_m = 0\). By the definition of \(\mathcal{S}_i\), there exists \(G_i \in \mathcal{H}_i\) such that
\begin{equation}
	s_i = \mathcal{M}_iG_i.
\end{equation}
Substituting then gives
\begin{equation}\label{eq:sum-of-si}
	0 = s_0 + \dots + s_m = \mathcal{M}_0G_0 + \dots + \mathcal{M}_mG_m
	= \mathcal{A}(G_0, \dots, G_m).
\end{equation}
We assumed \(\mathcal{A}\) to be injective, meaning that \((G_0, \dots, G_m)\) must be the zero tuple. Since \(s_i = \mathcal{M}_iG_i\) and \(G_i = 0\), we get
\begin{equation}
	s_i = \mathcal{M}_i 0 = 0,
\end{equation}
meaning that the \(\mathcal{S}_i\) form a direct sum. The equality follows from the linearity of \(\mathcal{M}_i\).
	
Having proven necessity, we now move on to proving sufficiency, i.e., assuming that each \(\mathcal{M}_i\) is injective and that the \(\mathcal{S}_i\) form a direct sum, showing that this means \(\mathcal{A}\) is injective. Say we have a tuple \((G_0, \dots, G_m)\) that maps to zero under \(\mathcal{A}\), giving us the same situation as in \eqref{eq:sum-of-si}. Having assumed that the \(\mathcal{S}_i\) form a direct sum, \(s_0 + \dots + s_m = 0\) means that \(s_0 = \dots = s_m = 0\) and therefore that \(\mathcal{M}_iG_i = 0,~i = 0, \dots, m\). Having also assumed \(\mathcal{M}_i\) is injective, only \(G_i = 0\) would make \(\mathcal{M}_iG_i = 0\). The tuple \((G_0, \dots, G_m)\) is therefore the zero tuple and \(\mathcal{A}\) is injective, concluding the proof.\qed
	\section{Proof of Proposition~\ref{prop:miso-gptf_consistency}}\label{apx:miso-gptf_consistency}
We begin by inserting \eqref{eq:prewhitened_model} into \eqref{eq:post_latent_mean}, giving
\begin{equation}\label{eq:post_latent_mean_inter}
	\hat{\bm{W}} = C_\sigma\left(\bm{L}^{(W)}\bm{W} + \sigma\bm{\epsilon}\right),
\end{equation}
where
\begin{equation}
    C_\sigma = \trans*{\bm{L}^{(W)}}\inv*{\bm{L}^{(W)}\trans*{\bm{L}^{(W)}} + \sigma^2\bm{I}_{2F}}.
\end{equation}
In order to simplify further, it is useful to have with us that \eqref{eq:svd} and matrix manipulation gives
\begin{equation}
	\bm{L}^{(W)}\trans*{\bm{L}^{(W)}} = \bm{U}_\mathrm{s}\bm{S}^2\trans{\bm{U}_\mathrm{s}},
\end{equation}
as well as
\begin{subequations}
\begin{align}
	 C_\sigma &= \left(\bm{V}_\mathrm{s}\bm{S}\trans{\bm{U}_\mathrm{s}}\right)\inv*{\bm{U}_\mathrm{s}\bm{S}^2\trans{\bm{U}_\mathrm{s}} +\sigma^2\bm{I}_{2F} + \sigma^2\bm{U}_\mathrm{s}\trans{\bm{U}_\mathrm{s}} - \sigma^2\bm{U}_\mathrm{s}\trans{\bm{U}_\mathrm{s}}} \\
	 &= \left(\bm{V}_\mathrm{s}\bm{S}\trans{\bm{U}_\mathrm{s}}\right)\inv*{\bm{U}_\mathrm{s}\left(\bm{S}^2 + \sigma^2\bm{I}_P\right)\trans{\bm{U}_\mathrm{s}} + \sigma^2\left(\bm{I}_{2F} - \bm{U}_\mathrm{s}\trans{\bm{U}_\mathrm{s}}\right)} \\
	 &= \left(\bm{V}_\mathrm{s}\bm{S}\trans{\bm{U}_\mathrm{s}}\right)\left(\bm{U}_\mathrm{s}\inv*{\bm{S}^2 + \sigma^2\bm{I}_P}\trans{\bm{U}_\mathrm{s}} + \sigma^{-2}\left(\bm{I}_{2F} - \bm{U}_\mathrm{s}\trans{\bm{U}_\mathrm{s}}\right)\right) \\
	 &= \bm{V}_\mathrm{s}\bm{S}\inv*{\bm{S}^2 + \sigma^2\bm{I}_P}\trans{\bm{U}_\mathrm{s}} \\
	 &= \bm{V}_\mathrm{s}\diag{\frac{s_p}{s_p^2 + \sigma^2}}\trans{\bm{U}_\mathrm{s}}.
\end{align}
\end{subequations}
We have used that \(\bm{U}_\mathrm{s}\trans{\bm{U}_\mathrm{s}}\) and \(\bm{I}_{2F} - \bm{U}_\mathrm{s}\trans{\bm{U}_\mathrm{s}}\) are mutually orthogonal projection matrices that sum to the identity. They decouple the vector space into independent subspaces, so we can find the overall inverse by simply inverting the weights associated with each individual projection piece. Returning to \eqref{eq:post_latent_mean_inter}, this lets us express
\begin{align}\label{eq:latent_mean_svd_apx}
	\hat{\bm{W}} &= \bm{V}_\mathrm{s}\diag{\frac{s_p^2}{s_p^2 + \sigma^2}}\trans{\bm{V}_\mathrm{s}}\bm{W} + \sigma\bm{V}_\mathrm{s}\diag{\frac{s_p}{s_p^2 + \sigma^2}}\trans{\bm{U}_\mathrm{s}}\bm{\epsilon},
\end{align}
which is the result given in \eqref{eq:latent_mean_svd}.

If we now let \(\sigma^2 \to 0\), we see that each \(\frac{s_p^2}{s_p^2 + \sigma^2} \to 1\). Moreover, when \(\bm{L}^{(W)}\) (or, equivalently, when \(\bm{L}\)) has full column rank, \(\bm{V}_\mathrm{s}\) is square and orthogonal, and so \(\bm{V}_\mathrm{s}\trans{\bm{V}_\mathrm{s}} = \bm{I}_R\), and the first term in \eqref{eq:latent_mean_svd_apx} converges to \(\bm{W}\). For the second term, in order to establish convergence in \(L^2\), we will show that its expected squared norm converges to zero. For ease of notation, we introduce
\begin{subequations}
\begin{align}
	\bm{n}_\sigma &= \sigma\bm{V}_\mathrm{s}\diag{\frac{s_p}{s_p^2 + \sigma^2}}\trans{\bm{U}_\mathrm{s}}\bm{\epsilon}~\text{and} \\
	\bm{A}_\sigma &= \sigma\bm{V}_\mathrm{s}\diag{\frac{s_p}{s_p^2 + \sigma^2}}\trans{\bm{U}_\mathrm{s}},
\end{align}
\end{subequations}
such that
\begin{equation}
	\bm{n}_\sigma = \bm{A}_\sigma \bm{\epsilon}.
\end{equation}
Its squared norm is
\begin{equation}
	\left\lVert\bm{n}_\sigma\right\rVert_2^2 = \trans{\bm{\epsilon}}\trans{\bm{A}_\sigma} \bm{A}_\sigma \bm{\epsilon},
\end{equation}
and as \(\bm{\epsilon}\) has covariance \(\bm{I}_{2F}\),
\begin{subequations}
\begin{align}
	\expectation{\left\lVert\bm{n}_\sigma\right\rVert_2^2} &= \trace{\trans{\bm{A}_\sigma} \bm{A}_\sigma} \\
	&= \sigma^2 \trace{\bm{U}_\mathrm{s}\diag{\frac{s_p}{s_p^2 + \sigma^2}}\trans{\bm{V}_\mathrm{s}}\bm{V}_\mathrm{s}\diag{\frac{s_p}{s_p^2 + \sigma^2}}\trans{\bm{U}_\mathrm{s}}} \\
	&= \sigma^2 \trace{\bm{U}_\mathrm{s}\diag{\frac{s_p^2}{\left(s_p^2 + \sigma^2\right)^2}}\trans{\bm{U}_\mathrm{s}}} \\
	&= \sigma^2 \trace{\diag{\frac{s_p^2}{\left(s_p^2 + \sigma^2\right)^2}}\trans{\bm{U}_\mathrm{s}}\bm{U}_\mathrm{s}} \\
	&= \sigma^2 \trace{\diag{\frac{s_p^2}{\left(s_p^2 + \sigma^2\right)^2}}} \\
	&= \sigma^2 \sum_{p = 1}^P\frac{s_p^2}{\left(s_p^2 + \sigma^2\right)^2}.
\end{align}
\end{subequations}
We now see that this term disappears as \(\sigma^2 \to 0\), and so \(\hat{\bm{W}}\to\bm{W}\) in \(L^2\). Convergence in probability follows from convergence in \(L^2\). To show that this also implies convergence of the \gls{miso-gptf} estimate, we subtract \eqref{eq:module_whitened_representation} from \eqref{eq:mean_whitened_representation}, take the squared norm and expectation to get
\begin{align}
	\hat{\check{\bm{G}}}_i - \check{\bm{G}}_i &= \bm{\Gamma}_i^{1/2} \left(\hat{\bm{W}}_i - \bm{W}_i \right)~\text{and} \\
	\expectation{\left\lVert\hat{\check{\bm{G}}}_i - \check{\bm{G}}_i\right\rVert_2^2} &\leq \left\lVert\bm{\Gamma}_i^{1/2}\right\rVert_2^2 \expectation{\left\lVert\hat{\bm{W}}_i - \bm{W}_i \right\rVert_2^2}.\label{eq:convergence}
\end{align}
As we have shown, when \(\sigma^2 \to 0\), \(\hat{\bm{W}}_i\to\bm{W}_i\), and so \eqref{eq:convergence} then gives that \(\hat{\check{\bm{G}}}_i \to \check{\bm{G}}_i\) for \(i = 0, \dots, m\) in \(L^2\), and hence in probability.\qed
    \newpage
	\section{Simulation Parameters}\label{apx:sim_params}
\begin{table}[htb]
  \caption{Parameters for the power system model in Section~\ref{sec:power_system}.}
  \label{tab:power_system_params}
  \centering
  \begin{tabular}{lrll}
    \toprule
    Parameter & Value & Unit & Description \\
    \midrule
    \(M_1\) & 6.00 & \unit{\second} & Inertia coefficient of bus 1 \\
    \(M_2\) & 1.50 & \unit{\second} & Virtual inertia coefficient of bus 2 \\
    \(D_1\) & 1.20 & \unit{\perunit} & Damping coefficient of bus 1 \\
    \(D_2\) & 2.00 & \unit{\perunit} & Virtual damping coefficient of bus 2\\
    
    \(K_{12}\) & 3.00 & \unit{\per\second} & Synchronizing coefficient, buses 1 and 2 \\

    \(K_\mathrm{g}\) & 10.00 & \unit{\perunit} & Governor gain at bus 1 \\
    \(T_\mathrm{g}\) & 0.20 & \unit{\second} & Governor time constant at bus 1\\
    \(T_\mathrm{t}\) & 0.70 & \unit{\second} & Turbine time constant at bus 1\\

    \(K_\mathrm{d}\) & 1.20 & \unit{\perunit} & Controller gain at bus 2\\
    \(T_\mathrm{c}\) & 0.08 & \unit{\second} & Filter time constant at bus 2\\
    \bottomrule
  \end{tabular}
\end{table}

	\bibliographystyle{elsarticle-num}
\bibliography{mybib}
\end{document}